  \newtheorem{problem}{Problem}
  \newtheorem{assumption}{Assumption}
  \newtheorem*{remark}{Remark}
\begin{document}
\title{On the cluster admission problem for cloud computing}

\titlenote{An early version of this work has appeared as a 6-page extended abstract in the proceedings of the 14th Workshop on the Economics of Networks, Systems and Computation (NetEcon'19).}
\author{Ludwig Dierks}
\affiliation{%
	\institution{University of Zurich}}
\email{dierks@ifi.uzh.ch}
\author{Ian A. Kash}
\affiliation{%
	\institution{University of Illinois at Chicago}
}
\email{	iankash@uic.edu}
\author{Sven Seuken}
\affiliation{%
	\institution{University of Zurich}}
\email{seuken@ifi.uzh.ch}
\authorsaddresses{Authors' email addresses: Ludwig Dierks, dierks@ifi.uzh.ch; Ian A. Kash, iankash@uic.edu; Sven Seuken, seuken@ifi.uzh.ch}

        \begin{abstract}  
        	Cloud computing providers face the problem of matching heterogeneous customer workloads to resources that will serve them.  This is particularly challenging if customers, who are already running a job on a cluster, scale their resource usage up and down over time.
        	The provider therefore has to continuously decide whether she can add additional workloads to a given cluster or if doing so would impact existing workloads' ability to scale. Currently, this is often done using simple threshold policies to reserve large parts of each cluster, which leads to low efficiency (i.e., low average utilization of the cluster). We propose more sophisticated policies for controlling admission to a cluster and demonstrate that they significantly increase cluster utilization. We first introduce the cluster admission problem and formalize it as a constrained Partially Observable Markov Decision Process (POMDP).
        	As it is infeasible to solve the POMDP optimally, we then systematically design admission policies that estimate moments of each workload's distribution of future resource usage.
        	Via extensive simulations grounded in a trace from Microsoft Azure, we show that our admission policies lead to a substantial improvement over the simple threshold policy. We then show that substantial further gains are possible if high-quality information is available about arriving workloads.  Based on this, we propose an information elicitation approach to incentivize users to provide this information and simulate its effects.       
        	
        \end{abstract}

\maketitle


        \section{Introduction}
       
        Cloud computing is a fast expanding market with high competition where small efficiency gains translate to multi-billion dollar profits.\footnote{https://www.microsoft.com/en-us/Investor/earnings/FY-2018-Q2/press-release-webcast} 
Like many other markets (e.g., ridesharing platforms, kidney exchanges, online labor markets, and display advertising),         the efficiency of this market relies on the performance of a matching algorithm~\cite{ashlagi2019edge,ma2019tight,behnezhad2018almost,assadi2017stochastic}.
In the cloud computing case, the matching algorithm matches incoming requests for virtual machines to the hardware that will be used to satisfy them.

Despite the importance of this matching, most cloud clusters currently run at low efficiency. In the cloud domain, low efficiency means low \emph{average utilization}  of the cluster (i.e., only a relatively small fraction of resources are actually used by customers at any given time). There are many reasons for this ~\cite{YanSpark}. These include technical limitations (such as the need to reserve capacity for node failures or maintenance),  inefficiencies in scheduling procedures (especially if virtual machines (VMs) might change size or do not use all of their requested capacity), as well as factors that are external to the cluster (such as fluctuations in overall demand). Another important cause is the nature of many modern workloads: highly connected tasks running on different VMs that should be run on one cluster to minimize latency and bandwidth use~\cite{Cortez}. In practice, this means that different VMs from one user are bundled together into a \emph{deployment} of interdependent workload. When the workload of a user changes, his deployment can request a \emph{scale out} in the form of additional VMs or shut some of its active VMs down.

In this paper, we pay special attention to these size changes. 
Changing deployment sizes mean that providers face the difficult problem of deciding to which cluster to assign a deployment, as a deployment which is small today may, without warning, see a dramatic increase in size that must be accommodated.
To get a sense of the difficulty of this problem, consider that, over time, the number of VMs needed by a specific deployment could vary by a factor 10 or even 100, and a request to scale out should almost always be accepted on the same cluster, as denying it would impair the quality of the service, possibly alienating customers. Furthermore, once a deployment is running in one cluster, it would be error prone to move it to a different cluster, making such migrations only feasible as a last resort (e.g., because of hardware failure). 
Providers consequently hold large parts of each cluster as idle reserves to guarantee that only a very low percentage of these scale out requests is ever denied, leading to relatively low average utilization.
        
        \subsection{Cluster Admission Control}
        
 We reduce the problem of determining to which cluster to assign a new deployment to the problem of determining, for a particular cluster, whether it is safe to \emph{admit a deployment}, or if doing so would risk running out of capacity if some deployments scale~\cite{Cortez}. While a lot of research has been done on scheduling {\em inside} the cluster~\cite{omega,borg,tetrisched,zhao:cluster}, the admission problem has not been well studied before. Consequently, cloud providers are often still using simple policies like rejecting all new deployments once a cluster passes a fixed utilization threshold, effectively reserving a percentage of the cluster only for scale-outs.
 These may seem reasonable at first glance, as the law of large numbers might seem to suggest that with many jobs in a large cluster the current utilization would be a good guide to future utilization. But as \citeauthor{Cortez} (\citeyear{Cortez}) have shown, a relatively small number of deployments account for most of the utilization. This suggests that the \emph{types of deployments} (i.e., small/large, fast/slow scaling, short/long lived etc.) currently in a cluster have a larger impact on the failure probability than is apparent, and policies that only take the current utilization into account are suboptimal.
        
        
        \subsection{Overview of Contributions}

 We formalize the cluster admission problem as a constrained Partially Observable Markov Decision Process (POMDP) \cite{smallwood1} where each deployment behaves according to some stochastic process and the cluster tries to maximize the number of active compute cores without exceeding its capacity. Since the exact stochastic processes of individual arriving deployments are not known to the cluster, it has to reason about the observed behavior. The large scale of the problem as well as the highly complicated underlying stochastic processes make finding optimal policies infeasible, even for the underlying (fully observable) Markov Decision Process and with limited look-ahead horizon. 

 
        Since optimally solving this POMDP is not feasible, we next propose a strategy for constructing heuristic policies via a series of simplifying assumptions. These assumptions reduce the highly branching look-ahead space down to the approximation of a random variable using its moments. We then present the currently used threshold policy that does not take probabilistic information into account as well as two new policies that take successively higher moments into account.
        We fit our model to data from a real-world cloud computing center (Microsoft Azure internal jobs~\cite{Cortez}) and, via simulations, show that our higher moment policies produce a $30\%$ improvement over current practice, which would translate to hundreds of millions of dollars a year in savings for large cloud providers.

        In our basic model, relatively little is known about arriving deployments, so the performance gains we observe from our more sophisticated policies are driven by being able to better condition admission decisions on the current state of the cluster.  We next  examine how the utilization of the cluster can be further increased if more precise \emph{prior information} about arriving deployments is available. Prior work has explored similar opportunities in the context of resource planning and scheduling in
                analytics clusters~\cite{morpheus,perforator}.
  To study the value of prior information, we introduce a simple framework which captures a notion of the quality of information available. Through additional simulations, we quantify how our policies benefit from this additional information.  Depending on the quality of information available, the resulting gains increase to $50\%-65\%$ relative to current practice.

Finally, given the importance of the quality of this information, we design a new information elicitation mechanism, with the goal of simultaneously improving the cluster's utilization as well as the customers' utility.   This requires care to find a design that allows meaningful information to be elicited in an incentive compatible way while being simple for customers to use.  To this end, we propose that rather than explicitly asking users to describe the behavior of their deployments, cloud providers instead provide them with the opportunity to group their deployments into customer-defined categories with similar characteristics.  The cloud provider can then set a small portion of the fee for a deployment using a pricing rule based on the variance of resource demands of deployments in a category. We show that such variance-based pricing provides users with the right incentives to (a) label their deployments properly (into, e.g., high and low variance deployments) and (b) structure their workloads in a way that helps the cluster run more efficiently.  We provide additional simulations to quantify the benefits of an accurate labeling.
        
In practice, the magnitude of the gains from our approach will depend on many details our simulations elide.  However, we believe that our simulations provide a persuasive case that (a) there are substantial economic gains available from using our new admission policies for the process of matching deployments to clusters and (b) there are substantial further improvements possible by using our information elicitation approach to elicit relevant information from customers.

 While our work focuses on a specific problem faced by cloud providers, our overall approach is fundamentally about managing the tail risks of a stochastic process. In our case, these are the rare events where the cluster runs out of capacity.  Thus, our approach may also be of interest in other domains where the management of tail risks is important, for example in finance.
        
        \subsection{Related work}
                 There is a large literature on {\em cluster scheduling and load balancing} ~\cite{omega,borg,tetrisched,Wolke}.
                   In addition, some work  addresses a different notion of admission control to a cluster, namely how to manage queues for workloads which will ultimately be deployed to that cluster~\cite{delimitrou2013qos}.
In our work, however, while studying which deployments to admit to a cluster, we abstract away from the question of exactly which resources should be used, so our research is orthogonal to this prior work on scheduling and load balancing.
                
                        There is also a literature that views scheduling through the lens of stochastic online bin packing~\cite{cohen2016,song2014}.  This literature also deals with issues of changing workloads on possibly overcommitted resources.  However, the models in these papers operate at smaller scales and shorter time horizons. At these scales, the key phenomena we study are not present.           
                        
                     One of the core characteristics of the cluster admission problem is that the arrival of a new deployment into the cluster causes an increase in scale-outs in the future (i.e., until the deployment dies). This effect of ``work creating more work'' is also broadly reminiscent of mutually-exciting (Hawkes) processes (see  \cite{hawkes2018hawkes} for a recent survey).  These have also been studied in the context of queuing systems \cite{daw2018queues}, though there are some important differences (e.g., in our problem only accepted deployments cause scale outs and the rate of additional scale-outs varies greatly for different types of deployments).

                There is a large literature on market design challenges in the context of the cloud~\cite{cloudarticle}.
                Existing work has studied both queueing models where decisions are made online with no consideration of the future~\cite{fixedmarket_netecon,dierks2016cloud}
                and reservation models which assume very strong information about the future~\cite{azar2015truthful,babaioff2017era}. 
                Our work sits in an interesting intermediate position where users may have rough information about the types of their deployments.
               Furthermore, this literature focuses on using prices to determine which jobs should be served and which should not. While our problem is similarly about accepting or rejecting deployments, we do not want to ration through price discrimination. This is because (nearly) every request is ultimately served by the cloud computing provider and whether it goes into this or another cluster is of little consequence for the user.

Other market design work has looked at how multidimensional resources can be fairly divided among deployments. For example, \emph{Dominant Resource Fairness}~\cite{ghodsi:dominant} is an approach that has proven useful in practice~\cite{hindman:mesos} and has inspired follow-up work more broadly in the literature on fair division~\cite{parkes:beyond,dolev2012no,gutman2012fair,dynamicfairdivision_journal}. In our work, we assume that compute cores are the resource bottleneck and we do not model multi-dimensional resource requirements. Therefore, the considerations studied in the above papers are not present in our work.

        Solving POMDPs is a well-studied problem \cite{smith,russell,roy}. Unfortunately, finding an optimal policy is known to be PSPACE-complete even for finite-horizon problems \cite{Papadimitriou}. Even finding $\epsilon$-optimal policies is $NP$-hard for any fixed $\epsilon$ \cite{lusena}. In our case, the problem is further exacerbated by the existence of side constraints. Constrained POMDPS are far less well studied than unconstrained POMDPS. General (approximation) strategies proposed in the past include linear programming \cite{poupart,walraven}, point-based value iteration \cite{kim}, a mix of online-look ahead and offline risk evaluation \cite{undurti}, and forward search with pruning \cite{santana}. None of these approaches is efficiently applicable when the state space of the underlying MDP is large or, as in our case, partly continuous. \citeauthor{Khonji} (\citeyear{Khonji}) recently proposed a  fully polynomial time approximation scheme (FPTAS) for constant horizon constrained POMDPs. While their algorithm is polynomial in the size of the observation and action spaces, it is  exponential in the number of time steps. This makes it not applicable in domains with long time horizons like cluster admission control.
        While some work has addressed continuous state space POMDPs \cite{porta,duff,brooks}, none of this prior work is directly applicable to a constrained problem of the size we study in this paper. 

        \section{Preliminaries}
        In this section, we formally model the cluster admission problem and then introduce a POMDP formulation to solve the provider's control problem. 
        \subsection{ Formal Model}
        
        We consider a single cluster in a cloud computing center.
        A cluster consists of $c$ {\em cores} that are available to perform work, also called the cluster's \emph{capacity}.  These cores are used by {\em deployments}, i.e., interdependent workloads that use one or more cores.  The set of deployments currently in the cluster is denoted by $X$, and
        each deployment $x\in X$ is assigned a number of cores $C^x$. Any core that is assigned to a deployment is called \emph{active}, while the remainder are called {\em inactive}.\footnote{We assume that inactive cores can become active at any time. This means that features such as hardware failure or capacity reserved for maintenance are not modeled. This is a reasonable simplification, as they do not significantly affect the relative utilization of policies.}
        We do not model the exact placement of cores inside the cluster and in consequence we also do not model the grouping of cores into VMs. 
        
        A deployment can request to {\em scale out}, i.e., increase its number of active cores. Each such request is for one or more additional cores and must be accepted whenever activating the requested number of cores would not make the cluster run over capacity. Following current practice, scale out requests must be granted entirely or not at all.  Deployments may shut down some of their cores over time and these cores then become inactive.  A deployment {\em dies} when its number of active cores becomes zero. This can happen in two ways. First, it can die by successively shutting down one core after another until reaching zero active cores. Second, it can die spontaneously by shutting down all of its cores at once; intuitively this models a decision by a user to kill the deployment.\footnote{We model death as permanent because with no active cores any future request could be assigned to a different cluster.}

        A deployment $x$ is described by 4 deployment parameters $(C^x, \mu_x,\lambda_x, \sigma_x)$. We have already introduced the size of a deployment $C^x$. The remaining three parameters are drawn independently from population-wide distributions with PDFs $f_\lambda, f_\mu, f_\sigma$.
        We now explain how these parameters govern the behavior of the deployment.
                      
        At a high level, we assume that the deployments are memoryless (i.e., the basic processes governing a deployment's behavior only depend on the current state, which results in all processes following Poisson/exponential distributions). This is common in the literature whenever arrival and departure processes are modelled (e.g., in queuing theory), and has been used in previous models of cloud computing~\cite{fixedmarket_netecon,dierks2016cloud}. Memorylessness is reasonable at cloud scale and simplifies some calculations, but it is not essential for our approach and policies.

       Specifically, we assume that each core's \emph{lifetime} is distributed according to an exponential distribution with parameter $\mu_x$. The \emph{maximum lifetime} of a deployment (i.e., the time between arrival and it spontaneously shutting down all of its remaining cores) is distributed according to an exponential distribution with parameter $\Delta \mu_x$, where $\Delta$ is a (population-wide) multiplicative factor. This  effectively leads to an average maximum lifetime for the deployment of $\frac{1}{\Delta}$ average core lifetimes.                         
                The \emph{number of scale outs} per time unit for the deployment $x$ is distributed according to a Poisson distribution with rate parameter $\lambda_x \mu_x^\nu$, where $\nu$ is a population-wide parameter. This form of the rate parameter captures the empirical fact that
                 deployments with longer-lived cores scale slower than those with short lived cores. 
                The \emph{size of a scale out} is distributed according to one plus a Poisson distribution with parameter $\sigma_x$. While this is an approximation on an individual level (VM sizes usually come in powers of $2$), it is reasonable at the level of a cluster.

        New deployment requests arrive over time and are accepted or rejected according to an \emph{admission policy}.\footnote{A rejected deployment is only rejected from this cluster, not from the cloud computing center as a whole. While outside of our model, in practice it then simply gets sent to the next cluster.}
        The policy must limit the admission of new deployments to ensure that the cluster is not forced to reject a higher percentage of scale out requests than is specified by an internal \emph{service level agreement} (SLA) $\tau$.\footnote{This is a cluster-level SLA and not a deployment-level SLA, as in practice, the probability of tail-events such as scale out failures cannot feasibly be evaluated for a single deployment.} If a scale out request cannot be accepted because the cluster is already at capacity, one failure for the purpose of meeting the SLA is logged.
        An optimal policy therefore maximizes the {\em utilization} of the cluster, i.e., the average number of active cores, while making sure the SLA is observed in expectation.

        \subsection{The Provider's Control Problem: POMDP Formulation}
 The problem the provider is facing when deciding whether to admit a deployment is that the decision must be made under uncertainty regarding future arrivals and the future behavior of deployments. In addition, the provider cannot directly observe the parameters of each deployment's processes. 
 To understand how a provider can find a policy given this uncertainty, we model the problem as a Partially Observable Markov Decision Process (POMDP)$(\mathbb{S}, \mathbb{A}, \mathbb{R}, \mathbb{T}, \Omega , \mathbb{O})$ whose policy is constrained to meet the SLA $\tau$.
 
 
        For the POMDP formulation, we assume that time is discrete\footnote{While deployments can arrive at arbitrary times, it takes time to make the acceptance decision. Thus, there is little loss in discretizing time.} and that the problem has a finite time horizon\footnote{Our approach works for any   choice of horizon (or even an infinite horizon with average or discounted rewards).} denoted N. 
        The state space, denoted $\mathbb{S}$, describes the space of all possible states of the cluster. A state $s\in \mathbb{S}$ contains all information about the cluster's active deployments $X(s)$ (including, for each deployment $x$ both its current size $C^x$ and its scaling process parameters $\lambda_x, \mu_x$ and $\sigma_x$) as well as the deployments that arrived during the current time step. The action set $\mathbb{A}$ consists of individually accepting or rejecting each of the deployments that arrived this time step. The reward function $\mathbb{R}(s) =  \sum_{x\in X(s)} C^{x} $ is the number of active cores in a state $s$. The transition probability function is denoted $\mathbb{T}(s'| s,a)\forall s' s \in \mathbb{S}, \forall a \in \mathbb{A}$. Given a state of the cluster and admission decisions, this function captures the distribution over scale outs, core deaths, and arrivals of new deployments  that occur during the next time step.      
 $\Omega$ is the set of possible observations and $\mathbb{O}: \Omega \times \mathbb{S} \rightarrow \left[0,1\right]$ an observation model. 
  In our case, the observation model $\mathbb{O}$ is deterministic, but many states share the same observation.
 For state $s$, we always observe $\omega \in \Omega$ equal to the sizes of all deployments that are in the cluster and those that arrived with the last state transition.

As is standard, we further denote the cluster's current knowledge about which state $s$ it is in via a belief state $b\in \mathbb{B}$, i.e., a probability distribution over states. Specifically, a belief state $b$ specifies, for each deployment $x$ that is in the cluster or arrived with the last state transition, its current size  $C^x$ and (posterior) distributions $f_\lambda^x, f_\mu^x, f_\sigma^x$ over its scaling process parameters. For a given $x$, we let $\tilde{x}= (C^x,f_\lambda^x, f_\mu^x, f_\sigma^x)$, i.e., the provider's belief about the deployment $x$.  
A policy $\pi$ can now be defined as a mapping from belief states to actions. 

Whenever the cluster obtains a new observation $\omega \in \Omega$ in time step $n+1$, the belief state is updated according to the observation and transition models, i.e.,
                \begin{eqnarray}
                b_{n+1}(s'| b_n, a, \omega) \propto \mathbb{O}(\omega|s') \sum_s \mathbb{T}(s'|s,a)b_n(s).
                \end{eqnarray}
 Given this, we can now define two auxiliary functions. We let $g_{n,\pi, b}$ denote  the probability density function of the distribution over the states $s_n$ and belief states $b_n$ of the system $n$ time steps in the future, given policy $\pi$ and starting belief $b$. Furthermore, we let $h(s_n,\pi(b_n))$ denote the expected percentage of scale-outs that fail with the next state transition from a given state-action pair. We can now formulate the provider's control problem as finding an optimal policy given an SLA.
\begin{problem}[Cluster Admission Problem]
\label{Prob:ClusterAdmissionProblem}
The cluster admission problem is to find an optimal policy $\pi$ for the POMDP   $(\mathbb{S}, \mathbb{A}, \mathbb{R}, \mathbb{T}, \Omega , \mathbb{O})$ subject to the following two constraints:
\begin{align}
\int_{(s_n,b_n)} g_{n,\pi, b}(s_n,b_n) h(s_n,\pi(b_n))d(s_n,b_n)  \leq \tau &\quad  \forall \text{ safe } b  \; \forall 0\leq n< N  \label{const} \\
\pi(b) = \text{reject all arrivals}&\quad  \forall \text{ unsafe } b
\end{align}
        
         Here, we call belief state $b$ \emph{safe} if the policy $\pi_0$ which always rejects newly arriving deployments satisfies 
        \begin{align}
        \int_{(s_n,b_n)} g_{n,\pi_0, b}(s_n,b_n) h(s_n,\pi_0(b_n))d(s_n,b_n)  \leq \tau \quad \forall 0\leq n< N  \label{eq:unsafe}
        \end{align}
        and \emph{unsafe} otherwise.
                \end{problem}
                
                Intuitively, we would like our SLA constraint \eqref{const} to hold in every belief state. However, even if we follow an optimal policy,  we can reach belief states where (in retrospect) too many deployments have been admitted, such that, even if no new deployments are admitted ever again, the constraint \eqref{const} would be violated. Thus, if we would require Equation \eqref{const} to hold in all belief states, we would have an infeasible problem. 
         To address this, we do not enforce Equation \eqref{const} in \textit{unsafe} belief states (as defined in Problem~\ref{Prob:ClusterAdmissionProblem}). We instead require the policy to reject all arriving deployments until it reaches a \textit{safe} belief state.\footnote{The requirement to reject \emph{all} deployments is a design decision we revisit in Section \ref{learn}.}
        
                Note that the current time step is not referenced in Equation \eqref{const} or \eqref{eq:unsafe}. This is intentional to avoid horizon effects: a cluster should not aggressively start to accept new deployments close to the end of its lifetime.

                \section{A Tractable Problem Formulation}
                Optimal policies for the cluster admission problem (i.e., Problem~\ref{Prob:ClusterAdmissionProblem}) cannot be calculated in practice for three reasons. 
                First, there is no simple closed form for the state transition probabilities.  
                Second, the state space of the the POMDP is very large: consider a cluster with   20,000 cores. It usually has hundreds of deployments, each described by 4 parameters, some of which are continuous. Even discretized, this results in a state space exponential in the size of the cluster.   
                Third, even disregarding unlikely state transitions, the branching factor is large. This renders standard methods that rely on optimizing limited lookaheads infeasible. 
                Therefore, we now present three carefully chosen simplifying assumptions under which we characterize an optimal policy. In Section \ref{sec:heur}, we use this characterization to design practical admission control policies.

        \begin{assumption}[No Future Arrivals]\label{fut_arr} No further deployments arrive after the current timestep. 
        \end{assumption}
                
                 This assumption ensures that a policy does not reject deployments simply because better behaved deployments might arrive in the future. In the cloud domain, this behavior is desirable, as  even customers with high demand variability must be served by some cluster in the data center.

        \begin{assumption}[Relaxed Capacity Constraints]\label{as_rel} Deployments can scale out even if doing so exceeds the cluster capacity $c$. For the purpose of defining $h$, a scale out is considered to fail only if the cluster has already exceeded its capacity.  
        \end{assumption}    
 With no future arrivals, the cluster's future state only depends on how the sizes of the currently active deployments change.  However, if a cluster is full, further scale out requests by deployments are denied, introducing correlations between the future sizes of different deployments. Assumption \ref{as_rel} removes this correlation.  
  In particular, let $L^x_n$ denote the random variable that is the number of active cores of deployment $x$ in time step $n$. With the first two assumptions, $L_n^x$ is independent of $L_{n'}^{x'}$ for all $x \not = x'$ and all $n, n'$. The same holds for the random variable $L^{\tilde{x}}_n$ for the provider's belief. 
  This is reasonable because the cluster being full should be rare if the SLA is being met.
 
        \begin{assumption}[At Most one Event per Timestep]\label{one_event} In any timestep, at most one event occurs (i.e., at most one deployment scales out, shuts down cores, or arrives to the cluster).
        \end{assumption}

                Since the probability that more than one event occurs in a single time step approaches zero with increased granularity of the time discretization, it is reasonable to assume this.\bigskip 
                
                 Using these three assumptions, we can now simplify the problem of determining when the SLA constraint is met. Recall that $\tilde{x}=(C^x,f_\lambda^x, f_\mu^x, f_\sigma^x)$ specifies the provider's belief over a deployment $x$. In the following, we denote by $A_\pi(b)$ the set of beliefs $\tilde{x}$ over the active deployments in belief state $b$ and the deployments that are accepted with policy $\pi$ in belief state $b$.
                 
                \begin{proposition} \label{prop:main} For all policies $\pi$, under Assumptions \ref{fut_arr}, \ref{as_rel} and \ref{one_event}, the following holds:
        \begin{eqnarray}\label{simpBound}
        \int_{(s_n,b_n)} g_{n,\pi, b}(s_n,b_n) h(s_n,\pi(b_n))d(s_n,b_n)  =  Pr(\sum_{\tilde{x}\in A_\pi(b)} L_{n}^{\tilde{x}} > c) & \forall b  \; \forall 0\leq n< N . 
        \end{eqnarray}
                \end{proposition}
                \begin{proof}
                        To see the that Equation \eqref{simpBound} holds, note the following: 
                        By Assumption \ref{fut_arr}, it suffices to consider only the deployments that are currently in the cluster or arrive in the current time step, i.e., $\tilde{x}\in  A_\pi(b)$. 
                        By Assumption \ref{one_event}, at most one scale out can fail per time step. Thus the left hand side of Equation \eqref{simpBound} captures the following: if a scale out occurs in time step $n$, what is the probability that it fails. By Assumption \ref{as_rel}, a scale out fails exactly when $\sum_{\tilde{x}\in  A_\pi(b)} L_{n}^{\tilde{x}}>c$.                        
                \end{proof}                
                Using this result, it is straightforward to characterize an optimal policy for the simplified problem. 
                \begin{corollary}\label{cor:main}
                                Under Assumptions \ref{fut_arr}, \ref{as_rel} and \ref{one_event}, an optimal policy $\pi$ accepts an arriving deployment in belief state $b$ if and only if
                                \begin{eqnarray}\label{eq:corrEq}
                                Pr(\sum_{\tilde{x}\in  A_\pi(b)} L_{n}^{\tilde{x}} > c) \leq \tau &  \forall 0\leq n< N . 
                                \end{eqnarray}
                        \end{corollary}
                        \begin{proof}
                                By Assumption \ref{one_event}, it suffices to consider one arrival. 
                                                        By Assumption \ref{fut_arr}, if an arrival could be accepted without violating the constraint, doing so is optimal.
                                                        By Proposition \ref{prop:main}, the constraint is equal to Inequality \eqref{eq:corrEq}.                             \end{proof}
        Proposition \ref{prop:main} and Corollary \ref{cor:main} show that to implement an optimal policy for the simplified problem it suffices to evaluate the probability that a sum of independent random variables exceeds a threshold. The remaining question now is how to compute or approximate this probability for our complex processes fast enough to allow rapid responses to customer requests. 
         
        \section{Designing new Admission Control Policies}\label{sec:heur}
 In this section, we first define the complex random variables $L_n^{\tilde{x}}$ in terms of simpler random variables that directly arise from the processes. 
The essence of our approach is to take this definition of $L_n$ and use it to compute approximate moments of $L_n$ (i.e., approximate summary statistics of the behavior of the random variable). 
We then use these approximate moments to design new policies. 

We can describe $L_n^{\tilde{x}}$ using the following random variables (which have a superscript $\tilde{x}$ which we generally omit for brevity):
\begin{itemize}
        \item $C$ is the variable denoting the number of active cores at time step 0.
        \item  $Y_i$ is the random variable denoting the number of scale outs that occur between time step $i-1$ and time step $i$, assuming the deployment has not died.   
        \item $S_{i,l}$ is the size the $l$'th scale out request would have, assuming at least $l$ scale out requests occur between time step $i-1$ and time step $i$.  
        \item  $Z_{n,i,k}$ is the binary random variable denoting whether the $k$'th core activated between time steps  $i-1$ and $i$ would still be active in time step $n$, assuming at least $k$ cores were activated and the deployment has not died.  For $i=0$, this instead refers to whether the $k$'th core that is active at timestep $0$ is still active at time step $n$.
        \item $D_i$ is the random variable which is 1 if $x$ would not have died due to a lack of active cores before time step $i$.  It can be defined recursively as
        \begin{eqnarray}
                D_i &=& D_{i-1} (1- \Pi_{k=1}^{C}(1-Z_{i,0,k}) \Pi_{j=1}^{i-1} \Pi_{k=1}^{\sum_{l=0}^{Y_i}S_{i,l}} (1-Z_{i,j,k}))\\
                D_1 &=& 1-\Pi_{k=1}^{C}(1-Z_{1,0,k})
        \end{eqnarray}
        \item   $B_n$ is the random variable denoting the number of cores that were active at time step 0 and are still active in time step $n$, which can be calculated as
        \begin{eqnarray}
                B_n &=&\sum_{k=1}^{C}Z_{n,0,k}.
        \end{eqnarray}
        \item   $Q_n$ is the random variable denoting the number of cores activated between time step 0 and time step $n$ that are still active assuming no service termination, i.e.
        \begin{eqnarray}
                Q_n &=&\sum_{i=1}^{n} \sum_{k=1}^{\sum_{l=0}^{Y_i}S_{i,l}}Z_{n,i,k}.
        \end{eqnarray}
        \item $M_i$ is the random variable which is 1 if the maximum lifetime of the deployment is at least $i$ and 0 otherwise.
        \item Finally, $L_n$ can be calculated as $ L_n= M_n D_n ( Q_n + B_n)$.
        
\end{itemize}

We now turn to the design of our approximate policies. 
                \subsection{Baseline (Zeroth Moment Policy )}
                Before introducing our new policies, we state the  
                 baseline admission control policy that is widely used in practice. It is a myopic policy that simply compares the current number of active cores to a threshold. This policy does not use any information about the set of deployments besides the total number of active cores. It can be we viewed as a degenerate case of our approach, as it does not take any probabilistic information about the random variables into account. We therefore also call it a \emph{Zeroth Moment Policy}. Because it uses a limited amount of information, it must be conservative in how many deployments it accepts, since it does not know how often or fast they will scale out.
                
                \begin{definition}[Zeroth Moment Policy (Baseline)]
                        Under a zeroth moment policy $\pi$ with threshold $t$, a newly arriving deployment is accepted if, after accepting the deployment, there would be less than $t$ cores active.
                \end{definition}
        
        \subsection{First Moment Policy}
        Our first policy approximates the probability of scale out failures (i.e., Equation~\eqref{simpBound}) by utilizing the first moments, i.e. the expected value of the deployment processes. 
        By Markovs's Inequality, for a non-negative random variable $L$ and $c \geq 0$, it holds that
        \begin{eqnarray}
                Pr(L\geq c) &\leq& \frac{E[L]}{c}.
        \end{eqnarray}
 Such a policy that utilizes Markov's Inequality therefore rejects an arriving deployment when the expected utilization lies above a chosen threshold and otherwise accept.
        \begin{definition}[First Moment Policy]
                Under a first moment policy $\pi$ with threshold $t$, a newly arriving deployment in belief state $b$ is accepted if, after accepting the deployment, the expected number of active cores would be less than $t$ in all future time steps, i.e.
                \begin{eqnarray}
                \sum_{\tilde{x} \in  A_\pi(b)} E[L^{\tilde{x}}_n] \leq t \quad \forall 0\leq n< N, 
                \end{eqnarray}
                where $E[L^{\tilde{x}}_n]$ is approximated, for example using the approach described in Proposition \ref{prop:expap}. 
        \end{definition}

\begin{proposition}\label{prop:expap}
        Assuming all $M_n$, $D_i$, $Q_n$, and $B_n$ are uncorrelated and $Z_{i,j,k}$, $Y_i$, and $S_{i,l}$ are uncorrelated as constituents of $D_i$, it holds:
        \small
        \begin{eqnarray}
        E[L_n] &= & E[M_n]E[D_n] ( E[Q_n] + E[B_n])\\
        E[Q_n]&=& \sum_{i=1}^{n} E[E[Y_1|\lambda, \mu]E[S_{1,1}| \sigma]E[Z_{n,i,1}|\mu]]\\
        E[B_n] &=& C E[Z_{n,i,k}]\\
        E[D_i] &\leq& E[D_{i-1}]( 1 -  (1-E[Z_{i,0,1}])^{(C)} \Pi_{j=1}^{i-1} (1-E[Z_{i,j,1}])^{E[Y_1]E[S_{1,1}]})\\
        E[D_1] &=& (1-(1-E[Z_{1,0,1}])^{(C)})
        \end{eqnarray}
        \normalsize
\end{proposition}
         The proof is provided in Appendix~\ref{apx:OmittedProofs}. It works by direct calculation and applying Jensen's Inequality to $D_i$. While ignoring some correlations introduces a nontrivial error into the approximation, this is done to ensure that the expectation can be evaluated in linear time. Additionally, the tail bounds from Markov's Inequality are relatively loose.
This makes it necessary to calibrate $t$, as simply setting $t = \tau c$ would yield excessively conservative policies.  Nevertheless,
        as we will see in Section \ref{sim1}, this approximation still carries enough information for our policies to work well once tuned.\footnote{Similar observations have been made in the literature on the use of effective bandwidth for admission control in queueing settings~\cite{kelly1991effective,berger1998extending}.}

        \subsection{Second Moment Policy}
        First moment policies do not take much information about the structure of deployments into account. In a sense they have to always assume the worst possible population mix and run the risk of accepting deployments with low expected size but high variance when close to the threshold. One way around this is to also take the second moment, i.e., the variance of $L_n$, into account. To address this, we propose to use Cantelli's inequality, a single-tailed generalization of Chebyshev's inequality, to approximate the probability of scale out failures (i.e., Equality~\eqref{simpBound}). Cantelli's inequality states that, for a real-valued random variable $L$ and $\epsilon\geq 0$, it holds that
                \begin{eqnarray}
                Pr(L-E[L]\geq \epsilon) &\leq& \frac{Var[L]}{Var[L]+\epsilon^2}.
                \end{eqnarray}

        If we now set $\epsilon = (c-\sum_{x \in X} E[L^x_n])$, we obtain a bound for the probability of running over capacity that takes more information into account than a first moment policy.
        \begin{definition}[Second Moment Policy]
                Under a  second moment policy  $\pi$ with threshold $\rho$, a newly arriving deployment in belief state $b$ is accepted if, after accepting the deployment, the estimated probability of running over capacity would be less than $\rho$ in all further time steps, i.e.
                \begin{align}\label{second_eq}
                \sum_{\tilde{x} \in  A_\pi(b)} E[L^{\tilde{x}}_n] \leq &c & \quad \forall 0\leq n< N \\
                \frac{\sum_{\tilde{x} \in  A_\pi(b)} Var[L^{\tilde{x}}_n]}{\sum_{\tilde{x} \in A_\pi(b)} Var[L^{\tilde{x}}_n]+(c-\sum_{\tilde{x} \in A_\pi(b)} E[L^{\tilde{x}}_n])^2} \leq& \rho& \quad \forall 0\leq n< N 
                \end{align}
                where $E[L^{\tilde{x}}_n]$ is approximated using the approach described in Proposition \ref{prop:expap} and $Var[L^{\tilde{x}}_n]$ is approximated using the approach described in Proposition \ref{VARFORM}.
        \end{definition}
        
        \begin{proposition}\label{VARFORM}
                Assuming all $M_n$, $D_i$, $Q_n$, and $B_n$ are uncorrelated, it holds:                
                \begin{eqnarray}
                V[L_n] &=&  E[M_n]^2 V[D_n (Q_n+B_n)] + V[M_n]E[D_n (Q_n+B_n)]^2\\
                &&+V[M_n]V[D_n (Q_n+B_n)]\\
                V[D_n (Q_n+B_n)] &=& E[D_n]^2 (V[Q_n]+V[B_n]) + (E[Q_n]+E[B_n])^2 V[D_n]\\
                &&+ V[D_n] (V[Q_n]+V[B_n])\\
                V[Q_n]&=& E[V[Q_n|\lambda,\sigma,\mu]] +V[E[Q_n|\lambda,\sigma,\mu]]\\  
                V[Q_n|\lambda,\sigma,\mu] &=&\sum_{i=1}^{n} \left(  \left( V[Y_i|\lambda,\mu] E[S_{i,l}|\sigma]^2 \right. \left. + E[Y_i|\lambda,\mu] V[S_{i,l}|\sigma]\right) E[Z_{n,i,1}|\mu]^2 \right.\\
                &&\left. +E[Y_i|\lambda,\mu]E[S_{i,l}|\sigma] V[Z_{n,i,1}|\mu] \right)\\
                V[B_n] &=&CE[  V[ Z_{n,i,k}|\mu]] +C^2 V[E[Z_{n,i,k}|\mu]] \\
                V[D_n] &=& E[D_n]-E[D_n]^2
                \end{eqnarray}
                \normalsize
        \end{proposition}
        The proof is provided in Appendix~\ref{apx:OmittedProofs}. It works by direct calculation. 
 Note that, since the expectation is approximated as given in Proposition \ref{prop:expap}, $V[D_n]$ carries over any approximation errors from $E[D_n]$. 
 As with first moment policies, the bound given by the inequality is again not tight enough to simply set it to $\rho =\tau$ and $\rho$ has to be tuned. 
        \subsubsection{Computational overhead.}
        The computational overhead of the second moment policy depends on the number of future time steps it evaluates and the chosen prior distributions for the provider's belief state. As long as well-behaved priors are used (e.g., the Gamma priors we use in our simulations),  each single rule application is fast.  For such priors, updating the estimate for the second moment policy for a single deployment can be done in $O(n)$ where $n$ is the number of evaluated time steps. Whenever a new deployment arrives, the estimate is updated for every active deployment. This leads to a worst case runtime of $O(|X| n)$ where $|X|\leq c$ is the number of active deployments. For multiple clusters this is fully parallelizable at the cluster level because each cluster has its own policy evaluation. Updating the prior of a deployment during runtime has negligible complexity ($O(1)$). A cloud computing center consisting of clusters of capacity $c$ with an arrival rate of $L$ new deployment requests per hour therefore has a computation overhead of at most $O(L c n)$  each hour, parallelizable into jobs of size $O(n)$.\footnote{If further ML is (optionally) employed to obtain an individual prior for arriving deployments (as discussed in Section \ref{learn}), that computation time would need to be added and depends on the algorithm in question.} This means that even relatively large look-ahead horizons $n$ can easily be implemented in practice.  

        \section{Empirical Evaluation}\label{sim1}

 In this section, we evaluate the performance of our admission policies using a model fitted to the real-world data trace of \citeauthor{Cortez} (\citeyear{Cortez}).

 \subsection{Data Trace and Fitted Model}
 \citeauthor{Cortez} (\citeyear{Cortez}) published a data trace consisting of all deployments that populated a Microsoft Azure datacenter in one month.  
 Since the data set is of limited size and only covers one month, we cannot directly evaluate the policies on the historical deployments. One month is too short to fully evaluate cluster admission policies as many effects only show up after months of usage. Instead, we fit processes to the data we do have, to simulate longer time periods (3 years, in our simulations). We defer evaluations against real
 deployments to future work. 
 
 An in-depth discussion of our fitting procedure can be found in Appendix \ref{fitting}.
 The resulting model utilizes Gamma priors, which are a very general distribution (containing the Chi-squared, Erlang and Exponential distributions as special cases) and fit the data well. The fitted parameters are shown in Table~\ref{table1}. The moment approximation resulting from combining Propositions \ref{prop:expap} and \ref{VARFORM} with these priors is given in Appendix \ref{A:GAMMA}.
 In the following we present the results of our simulations.


        \begin{table*}[t!]
                \small
                
                \centering
                
                \begin{tabular}{cc}
                                \hline
                        \multirow{3}{*}{Priors} & $\mu \approx Gamma(0.3107,0.5778)$ \\
        
                        & $\lambda \approx Gamma(0.4907,0.4496)$\\
                        
                        &$\sigma \approx Gamma(0.2616,0.0552)$\\
                                                \hline \noalign{\smallskip                              }
                                                
                                                \hline 
                        \multirow{2}{*}{Global Parameters} & $\Delta = 0.119$ \\
                        
                         &  $\nu = 0.673$   \\
                                \hline \noalign{\smallskip}                     
                \end{tabular}
                \caption{Fitted processes}      \label{table1}
        \end{table*}

 \subsection{Simulation Setup}

        We simulate clusters with capacity $c=20,000$ for a 3-year period with all three policies.         An average of $1$ new deployment per hour arrives according to a Poisson process. The parameters of each arriving deployment are drawn from the fitted distributions presented in Table \ref{table1}. We tune the threshold for each policy via binary search, subject to meeting an SLA of $0.01\%$.\footnote{This SLA is somewhat stricter than is typically used in practice, which helps counterbalance our model abstracting away complexities such as fragmentation and node failure.} 
        We verify that the SLA is satisfied on average (over runs and months).

Evaluating our first and second moment policies with a three year time horizon and fine-grained time steps is fast enough to be done in real time. However, doing so would take too much computation power to simulate the thousands of years of cluster operation required for our experiments. 
Therefore, we use the following approach to simulate clusters with a three-year lifespan with a reasonable number of core-hours. We divide the  first and second moment policies into 5 subpolicies and only accept a deployment if all subpolicies accept it. The subpolicies have increasingly fine-grained time steps, but each only evaluates a limited look-ahead horizon:  3 years, 1 year, 1 month, 1 week, and 24 hours. Each subpolicy discretizes its time into 600 timesteps.  We performed 500 runs and report the average utilization across all runs. Since failures to scale out are focused in the tail of the runs (e.g., with the tuned zeroth moment parameter only about $1\%$ of runs contain any failures), we employ importance sampling to obtain sufficient samples from the tail to guarantee SLA satisfaction with high confidence.  Details about the importance sampling can be found in Appendix \ref{AP:SAMP}. To avoid misestimating confidence intervals with biased data, we report $95\%$  bias-corrected and accelerated bootstrap confidence intervals (following \cite{efron}, $100000$ re-samples) instead of standard errors.

        \begin{table}[t]

                \centering
        
                        \vspace{2pt}
                \begin{tabular}{rcccc}
                        \noalign{\smallskip} \hline \noalign{\smallskip}
                        
                        \textbf{Policy} &      \textbf{Threshold} & \shortstack{ \textbf{Utilization}  }  \\
                        \hline \noalign{\smallskip}
                        Zeroth Moment          & $t=8,864$  & $50.45\%$ $ (48.2, 52.7)$  \\

                        \noalign{\smallskip} \hline \noalign{\smallskip}
                        \noalign{\smallskip} \hline \noalign{\smallskip}
                        
                        First Moment&     $t=14,223 $& $66.19\%$ $(63.41,68.94)$ \\

                        \noalign{\smallskip} \hline \noalign{\smallskip}
                        \noalign{\smallskip} \hline \noalign{\smallskip}
                        Second Moment &     $\rho=0.112 $&  $67.32\%$ $(64.35,70.26)$ \\
                        \hline \noalign{\smallskip
                        }
                \end{tabular}
                        \caption{Simulation results showing the performance of the three policies. $95\%$ bootstrap confidence intervals are shown in parentheses.}
                                \label{table2}
        \end{table}
        
        \subsection{Results}

         We now compare the utilization of our policies to the industry baseline zeroth moment policy. The results are summarized in Table \ref{table2}. The zeroth moment policy obtains its best result with a threshold of $t=8,864$, i.e., new deployments are accepted whenever less than $8,864$ would be active in case of acceptance. This results in an average utilization of $50.45\%$ over the lifetime of the cluster.  The first moment policy with threshold $t=14,223$ increases the utilization by $15.74$ percentage points to $66.19\%$. This constitutes a relative increase in utilization of $31.2\%$ over the zeroth moment policy. Similarly, the second moment policy with threshold $\rho=0.112$ achieves a utilization of $67.32\%$, a relative improvement of $33.44\%$. 
         
         At first sight, it may be surprising that the first and second moment policies achieve similar utilization. However, this can be explained as follows. Under both policies, the overwhelming number of simulated clusters never reject a scale out request. However, in a few runs, too many  large, long-lived deployments are accepted in the beginning of a cluster's lifetime. This leads to many rejections months or even years in the future. Since this happens early in a cluster's lifetime when not much is known about deployments, the difference between the first and second moment policies is relatively small. This highlights the value of obtaining additional (probabilistic) information about arriving deployments. We study this in the next section.

        \section{The Value of Deployment-specific Priors}
        \label{learn}

        So far, we have assumed that the cluster does not have any information about arriving deployments, except for the initial number of cores. The acceptance decision therefore had to primarily depend on the state of the deployments that are already in the cluster.

Intuitively, a policy could more precisely control whether accepting a deployment would risk violating the SLA  if the policy had more information about the future behavior of the specific deployment. One way to obtain such information would be to use machine learning (ML) based on features of the arriving deployment and past deployment patterns of the submitting user~\cite{Cortez}. While evaluating particular ML algorithms is beyond the scope of this paper, we evaluate the effect that different levels of available information have.
        To do this, we need to parameterize the level of knowledge. For this we assume that the cluster simply gets passed some number of observations from each true scaling process distribution of each arriving deployment.\footnote{As we have used conjugate prior distributions in our model, this approach matches the standard interpretation of parameters of the posterior distribution in terms of ``pseudo observations.''}

                \subsection{Improving the Handling of Short-lived Deployments}

                         Our moment policies as defined so far cannot yet make optimal use of this additional prior information. While an optimal policy with good prior information would balance the admission of long-lived and short-lived deployments to keep the utilization more stable over time, the moment policies always accept new deployments on a first-come first-served basis until their constraints are violated. This means that if many very long-lived, slow-scaling deployments arrive in the beginning, the cluster sometimes quickly reaches unsafe belief states in which it stops accepting any new deployments, but for which the critical event lies months or even years in the future.
While stopping the admission of new deployments in such a situation is reasonable when no prior information about arriving deployments is available, \emph{with} prior information the policy might know that some arriving deployments will almost surely be dead by the time the cluster has filled up.
         To make use of this, we now present a heuristic modification of our moment policies such that the resulting policy is allowed to accept deployments that only have a marginal impact on the possible SLA violation, even in unsafe states. As a simple condition for this, we call a deployment \emph{marginal in timestep $n$} if its expected size is smaller than $10^{-5}$, i.e.,  $E[L^x_n] < 10^{-5}$.

                \begin{definition}[Marginal Heuristic]
                        Under a first or second moment policy $\pi$ with the  \emph{marginal heuristic}, a newly arriving deployment $x$ in belief state $b$ is accepted if in each future time step $n<N$, after accepting the deployment, either the underlying moment policy's condition is satisfied or the arriving deployment $x$ is marginal, i.e., $E[L^x_n] < 10^{-5}$.
                \end{definition}
                
                Going forward, we use the marginal heuristic, unless explicitly noted. It should be pointed out that this heuristic does not have any effect when the cluster does not have good prior information about arriving deployments. With only the global prior, no deployment is marginal for any future timestep $n<N$.

                                \subsection{Simulation Results}

        In this section, we present simulation results to demonstrate the value of deployment-specific priors. 
        We simulate the first and second moment policies (with marginal heuristic), now with four different levels of prior information:\ 0, 1, 5, and 50 observations. Otherwise, we use the same simulation setup as in Section \ref{sim1}. The results are shown in Figure \ref{simref}.\footnote{We also simulated our policies without the marginal heuristic (see Appendix \ref{AP:AB}), and we observe the same general patterns. As one would expect, without the marginal heuristic, the achieved utilization is somewhat smaller (especially with good priors).} We see that having prior knowledge equivalent to even a single observation improves utilization significantly, resulting in a utilization of $75\%$ and $79.5\%$ for the first and second moment policies, respectively.
        Better priors lead to even better utilization, with the second moment policy reaching a utilization of $83.8\%$ with $50$ observations. 
        
        While it is infeasible to calculate the utilization corresponding to an optimal solution of the POMDP, we have derived an \textit{upper bound }of $92.1\%$ by analyzing policies that do not have to satisfy any SLA. Thus, the second moment policy with good prior information achieves more than $90\%$ of the theoretically achievable as given by this (unreachable) upper bound, while delivering a relative  increase in utilization of $24.48\%$ above the same policy without prior information and a $66.1\%$ increase over the baseline (i.e., zeroth moment) policy. This shows both the power of our policies and the great importance of taking all available prior information about arriving deployments into account.

        \begin{figure}[t]
                \centering%
                \includegraphics[height=8cm]{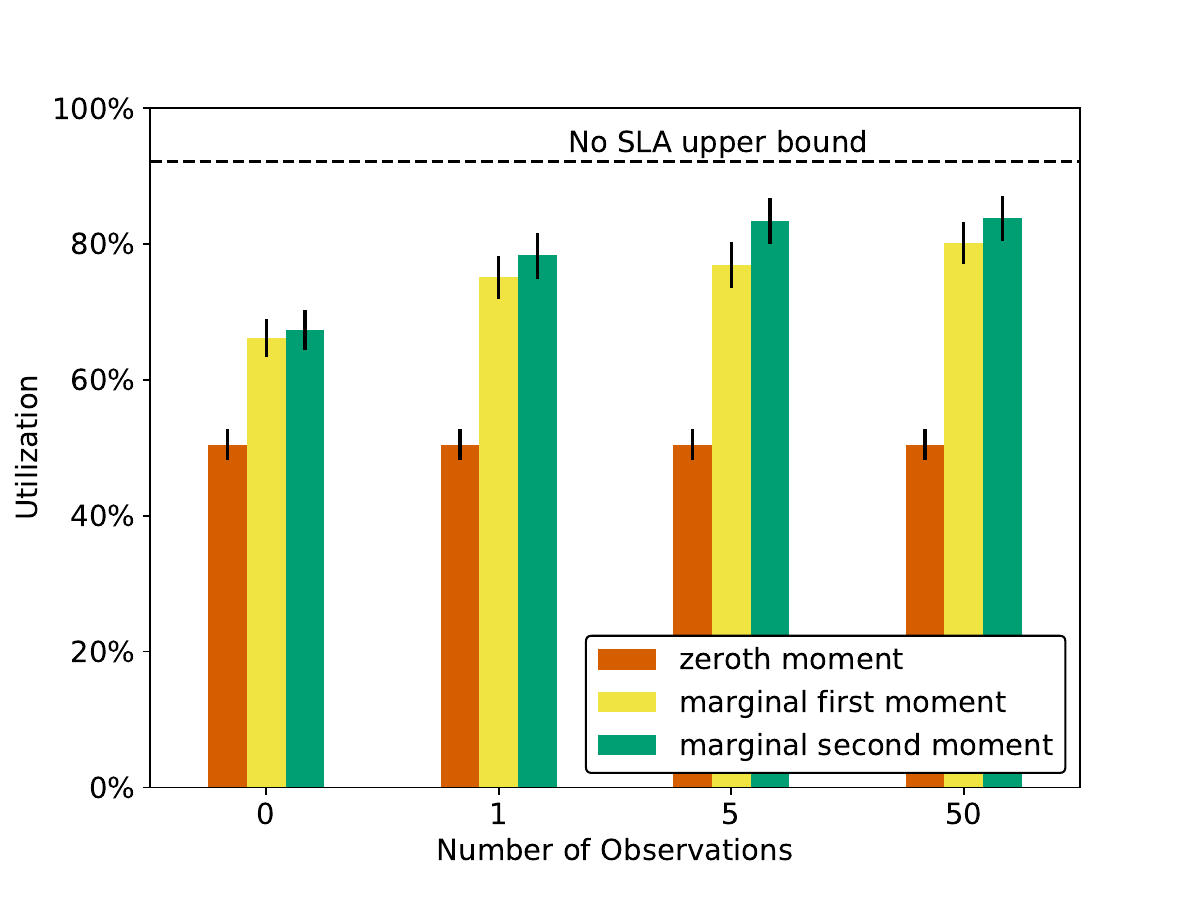}
                \caption{Performance of different policies depending on prior information (error bars indicate $95\%$ bootstrap confidence intervals)}
                \label{simref}
        \end{figure}%

        \section{An Elicitation Mechanism to Improve Priors}
        
        Given the importance of the quality of prior information that we established in the last section, in this section, we use techniques from mechanism design to improve this quality.  Our approach assumes that users do not typically submit deployments with arbitrary parameters. Instead, they may have a small number of different \emph{types} of deployments.  However, the typical mechanism design approach of using a direct revelation mechanism, where customers reveal their full type, seems problematic. First, it may be very cumbersome from a user interface perspective. Second, customers may not have such a detailed understanding of their deployments and thus would  run the risk of being penalized for a ``misreport.''   Instead, we seek a design that allows meaningful information to be elicited in an incentive compatible way while being simple for customers to use.  To this end, we propose that rather than explicitly asking users to describe the behavior of their deployments, cloud providers instead provide them with the opportunity to group them into customer-defined categories of roughly similar deployments.  Learning priors for each individual category then results in more precise priors and higher utilization. To incentivize such grouping, the cloud provider can set a small portion of the fee for a deployment using a pricing rule based on the variance of resource demands of deployments in a category. 
        We now first present and analyze such a variance-based pricing mechanism  and then evaluate the potential utilization gains this mechanism may produce via additional simulations.

        \subsection{Variance-based Payment Rule}
        Typically, users are charged a fixed payment per hour for each core their  deployment uses. With a variance-based payment rule, we add a small additional charge based on the variance of the estimate for the deployment's scaling process and allow users to \emph{label} the type of their deployments, resulting in an hourly \emph{variance-based payment rule} $q(x)$ of the form:
        \begin{eqnarray}
        q(x)= \kappa_1 C^x + \kappa_2 Var(x), \label{eq:payment}
        \end{eqnarray}
        where $\kappa_1$ and $\kappa_2$ are price constants and $Var(x)$ is an estimate of the variance of the deployment.
        A payment rule of this form incentivizes users to assign similar labels to similar deployments to minimize the estimated variance.
        
        
        To see this, consider a user who has two types of deployment, $x$ and $y$, with true variances $Var(x)$ and $Var(y)$. He could now simply submit the deployments under a single label. For the provider, this means that each submitted deployment is of either type with a certain probability, which increases the variance of her prediction. But if the user would label each deployment appropriately with either ``$x$'' or ``$y$,'' then the provider would know for each arriving deployment which type it is, reducing variance and therefore the need to reserve capacity. The following proposition, which is immediate from the law of total variance, shows that, at least in the long run, labeling his deployments also reduces a user's payments.

        \begin{proposition}\label{marketprop}
                Let $z$ be the mixture that results from submitting one of two types of deployments $x$, $y$ chosen by a Bernoulli random variable $\alpha \sim Bernoulli(p_\alpha)$, i.e., such that $z$ is of type $x$ with probability $p_\alpha$ and of type $y$ with probability $1-p_\alpha$. Then it holds that
                \begin{eqnarray}
                p_\alpha Var(x) + (1-p_\alpha) Var(y) \leq Var(z)
                \end{eqnarray}
        \end{proposition}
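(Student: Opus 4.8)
The plan is to derive the claim as a one-line consequence of the law of total variance, treating the type-selecting Bernoulli variable $\alpha$ as the conditioning variable. First I would make the mixture structure explicit: the random quantity whose variance the pricing rule tracks (the scaling-process outcome priced by $Var(\cdot)$) is drawn from the type-$x$ distribution when $\alpha = 1$ and from the type-$y$ distribution when $\alpha = 0$. Thus conditioning on $\alpha$ recovers the two pure types, so that $Var(z \mid \alpha = 1) = Var(x)$ and $Var(z \mid \alpha = 0) = Var(y)$, while $E[z \mid \alpha = 1] = E[x]$ and $E[z \mid \alpha = 0] = E[y]$.

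Next I would apply the law of total variance,
\begin{eqnarray}
Var(z) = E\big[Var(z \mid \alpha)\big] + Var\big(E[z \mid \alpha]\big).
\end{eqnarray}
The first term is an expectation over the two outcomes of $\alpha$, which by the previous step equals exactly $p_\alpha Var(x) + (1-p_\alpha) Var(y)$, the left-hand side of the claim. The second term is the variance of a random variable, hence non-negative, and this immediately yields the inequality. If a sharper statement is desired, I would note that $E[z \mid \alpha]$ is a two-point random variable taking the value $E[x]$ with probability $p_\alpha$ and $E[y]$ with probability $1 - p_\alpha$, so the gap equals $p_\alpha (1 - p_\alpha)\,(E[x] - E[y])^2 \geq 0$. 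This makes clear that the inequality is strict precisely when the two types have different means and $0 < p_\alpha < 1$.

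There is essentially no analytic obstacle here, since the result is a textbook decomposition; the only point requiring care is the \emph{modeling} step rather than any computation. Specifically, I must ensure that $Var(\cdot)$ as used in the provider's pricing rule is genuinely the (conditional) variance of the underlying scaling process, so that mixing over unlabeled submissions is faithfully captured by conditioning on $\alpha$. Once that correspondence is fixed, the law of total variance applies verbatim, and the between-group term $Var\big(E[z \mid \alpha]\big)$ supplies the non-negative slack that separates the pooled variance $Var(z)$ from the label-averaged variance $p_\alpha Var(x) + (1 - p_\alpha) Var(y)$.
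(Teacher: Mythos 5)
Your proof is correct and follows exactly the paper's own argument: conditioning on the type-selecting variable $\alpha$, applying the law of total variance, and dropping the non-negative between-group term $Var\left(E[z\mid\alpha]\right)$. Your extra observation that the gap equals $p_\alpha(1-p_\alpha)\left(E[x]-E[y]\right)^2$ is a nice (if inessential) refinement beyond what the paper states.
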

        \begin{proof}
                Since $z$  has finite variance, the law of total variance states:
                \begin{eqnarray}
                Var(z)&=&E [Var (z|\alpha)]+Var (E [z| \alpha])\\
                &\geq &E [Var (z|\alpha)]\\
                &=& p_\alpha Var(x) + (1-p_\alpha) Var(y)
                \end{eqnarray}                          
        \end{proof}
        Proposition \ref{marketprop} shows that the user would be better off by splitting the mixture and submitting the deployments under separate labels, directly resulting in the following corollary.

        \begin{corollary}
         Under any variance-based payment rule $q(x)$ of the form given in Equation~\eqref{eq:payment} with $\kappa_2>0$, it is a dominant strategy for users with multiple deployment types to label deployments by type. 
        \end{corollary}
         Note that this corollary abstracts away issues of learning and non-stationary strategic behavior; but for reasonable learning procedures we expect a consistent labeling to lead to lower variance than a mixture while learning. Further, this approach not only gives the user correct incentives to reveal the desired information, but actually incentivizes him to improve the performance of the system.  In particular, another way he can lower his payment under this scheme (outside the scope of our model) is to design his deployments in such a way that they have lower variance in their resource use.  Since more predictable deployments would allow the policy to maintain a smaller buffer, this provides an additional benefit to the system's utilization.
     
        \begin{remark}
       The per core-hour payments of a user can be based on \emph{a-priori} or \emph{a-posteriori} estimates of a deployment's variance. With an a-priori estimate, the user knows his payment (per core-hour) before he starts his deployment, which can be very important for certain   users. On the other hand, such a payment rule invites strategic deployment submissions: a user could   submit a number of small low-variance deployments before submitting a large high-variance deployment, with the goal of reducing the provider's estimate and thus his payment for the large deployment. The provider could mitigate the potential gain of such a manipulation by carefully choosing the estimation procedure, so it is unclear how frequent and successful such manipulations would be in practice.           
          With an a-posteriori estimate (i.e., the user's hourly payment is based on the variance estimate of the deployment at the end of the hour), such strategic deployment submissions could be made unprofitable. However, now users do not know their exact payments in advance. To cater to users that require a fixed price before submitting a deployment, the provider might want to set an upper limit on prices and advertise lower prices as  discounts.          
         Thus, which type of estimate is optimal for a given provider depends on the requirements of her user base.
        \end{remark}

        How much any given user could ultimately save by labeling his deployments mostly depends how different his deployment types are and on how high the provider sets the charge for variance. A user whose deployments are quite uniform will not save much, while a user with some deployments which never scale and some that scale a lot can potentially save a lot. Note that how much the provider should charge is not immediately clear. While she would want to set a high price to put a strong incentive on users, she also has to keep the competition from other providers in mind. At what point the loss of market share outweighs the gain in utilization is an intriguing problem we leave for future work.

                \subsection{Simulation Results}
                
                To illustrate the potential gains in utilization of such a variance-based payment rule, we consider a setting where all users have two deployment types, drawn independently from the same population distribution as in Section \ref{sim1}.  We assume that each user only submits a single deployment and then departs, but that the provider has $5$ prior observations each from every user's two deployment types. Otherwise, the simulation setup is again the same as in Section \ref{sim1}. In this setting, we contrast the utilization of a provider employing a  second moment policy with and without employing a variance-based payment rule.  With the variance-based payment rule (and users consequently declaring their type), the setting becomes equivalent to the one presented in Section \ref{learn} with $5$ observation. When the  variance-based payment rule is \textit{not} used (and users consequently do not label their deployments),  we assume that the provider updates her belief for both types independently and evaluates her second moment policy on the mixture. 

\begin{figure}[t]
                        \centering%
                        \includegraphics[height=8cm]{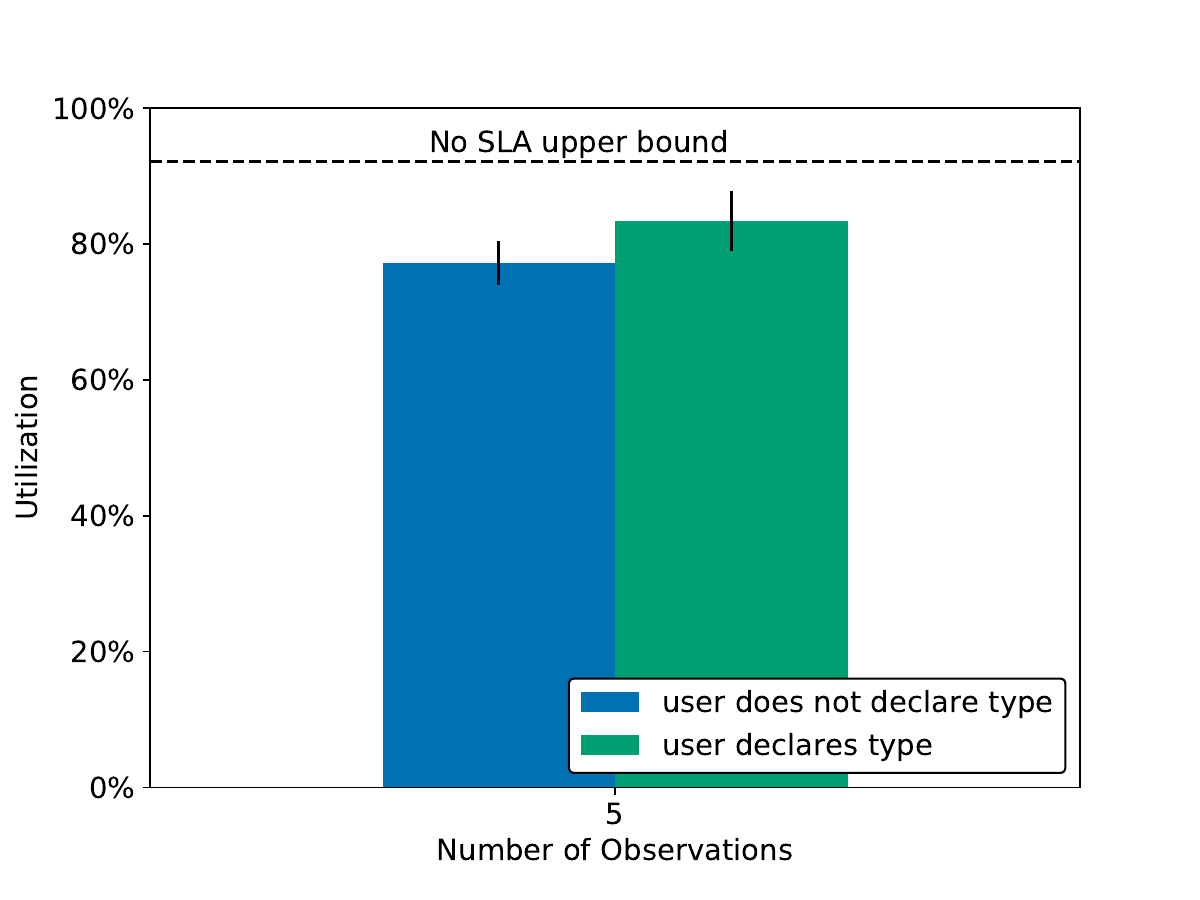}
                        \caption{Performance of the second moment policy with two deployment types per user and $5$ observations  (error bars indicate $95\%$ bootstrap confidence intervals)}
                        \label{fig:report}
                \end{figure}%

                As we can see in Figure \ref{fig:report}, when users do not label their deployments, this yields a utilization of $77\%$. In contrast, when users do label their deployments, then (as expected) the utilization increases to $83\%$. This shows that, from a cluster point of view, employing a variance-based payment rule leads to a sizable increase in utilization.  
\section{Conclusion}
We have studied the problem of cluster admission control for cloud computing, where accepting demand now causes unrejectable demand in the future. The optimal policy is given as the solution to a very large constrained POMDP, which is infeasible to solve. In practice, simple threshold policies are employed for admission control. In contrast, we have proposed multiple more sophisticated policies. Our results demonstrate that the utilization can be increased  by approximately $30\%$ just from learning about deployments while they are active in the cluster. 
Furthermore, we have shown  that this can  be improved to a $50-65\%$ gain if better prior information about arriving deployments is available, for example through learning or elicitation techniques. 
Even though the realized gains in practice are likely to be somewhat lower due to practical engineering constraints (e.g., the need to handle node outages), they should still be sizable.  
 At cloud scale, even savings of a few percent translate to many hundreds of millions of dollars, and any dollar saved directly translates to a gross profit increase for the cluster provider.

Our work points to a number of interesting future research directions. 
We have only looked at cluster admission policies at the level of a single cluster, abstracting away the question of which cluster should be chosen, implicitly assuming a first-fit or random-fit heuristic. Future research should look at the question whether filling all clusters with the same mixture of deployments is reasonable or if dedicating different clusters to different types of deployments could be used to further increase utilization. 
A related direction is that our model and policies assume that deployment behavior does not change during runtime. While this is a reasonable approximation for many deployments, some long running deployments might exhibit more involved life cycles in practice. One way for policies to account for this is to discount past observations. 

There are also open questions regarding mechanism design in the cloud domain. In subsequent work, \citeauthor{dierks2020competitive} (\citeyear{dierks2020competitive}) have already analyzed the competitive effects of employing the variance-based payment rule we proposed in this paper in a duopoly model. They find that, in equilibrium, while using a variance-based payment rule weakly increases welfare, the effects on the providers' profits are ambiguous. However, their analysis does not take into account that a variance-based payment rule can yield better priors about submitted deployments and thus improve the efficiency of the provider's admission policy. Future work could explore an alternative economic design, where the provider offers a menu of two alternatives to her users: the standard alternative, where deployments can always scale out; and a cheaper alternative, where deployments are not allowed to scale out (or can scale only with a ``best effort'' guarantee).  Such an approach could be viewed as implicitly selling finance-style options on the ability to scale out.

%

\appendix

\bibliographystyle{ACM-Reference-Format}
\bibliography{admission}

\newpage
\appendix



\section{Omitted Proofs}
\label{apx:OmittedProofs}

\begin{proof}[Proof of Proposition \ref{prop:expap}]
        \begin{itemize}
                
                                \item 
                                If $M_n$, $D_n$, $Q_n$ and $B_n$ are uncorrelated, it holds by linearity and multiplicativity of the expected value for uncorrelated random variables:
                                \begin{eqnarray}
                                E[L_n] &=& E[M_n]E[D_n] ( E[Q_n] + E[B_n])
                                \end{eqnarray}

                \item $Q_n$:
        For the expectation of $Q_n$ it holds:
        \begin{eqnarray}        
                E[Q_n]&=& E[\sum_{i=1}^{n} \sum_{k=1}^{\sum_{l=0}^{Y_i}S_{i,l}}Z_{n,i,k}]\\
                &=& \sum_{i=1}^{n}E[ \sum_{k=1}^{\sum_{l=0}^{Y_i}S_{i,l}}Z_{n,i,k}]\\
                &=& \sum_{i=1}^{n}      E[ E[ \sum_{k=1}^{\sum_{l=0}^{Y}S_{i,l}}Z_{n,i,k}|\lambda, \sigma, \mu]]\\                 
        E[ \sum_{k=1}^{\sum_{l=0}^{Y}S_{i,l}}Z_{n,i,k}|\lambda, \sigma, \mu]
                        &=& E[ E[ \sum_{k=1}^{\sum_{l=0}^{Y}S_{i,l}}Z_{n,i,k}|\sum_{l=0}^{Y_i}S_{i,l}]|\lambda, \sigma, \mu]\\ 
                &=& E[ \sum_{l=0}^{Y_i}S_{i,l}|\lambda, \sigma, \mu]  E[ Z_{n,i,1}|\lambda, \sigma, \mu]\\
                &=&E[Y_1|\lambda,\mu]E[S_{1,1}| \sigma]E[Z_{n,i,1}|\mu] 
        \end{eqnarray}

 \item          $B_n$:
 By definition, it holds                
 \begin{eqnarray}                        
 E[B_n] &=& E[\sum_{j=1}^C Z_{n,0,k}] = C E[Z_{n,0,k}]
 \end{eqnarray}
 
        \item $D_i$:
                If all $Z_{i,j,k}$, $Y_1$ and $S_{i,l}$ are uncorrelated, it holds
                \begin{eqnarray}                         
                E[D_i]&=& E[D_{i-1} ( 1- \Pi_{j=0}^{i-1} \Pi_{k=0}^{\sum_{l=0}^{Y_j}S_{j,l}} (1-Z_{i,j,k}))]\\
                &=& E[D_{i-1}] (1 -  E[ \Pi_{j=0}^{i-1} \Pi_{k=0}^{\sum_{l=0}^{Y_j}S_{j,l}} (1-Z_{i,j,k})])\\
                &=&E[D_{i-1}] ( 1 -  E[E[ \Pi_{j=0}^{i-1} \Pi_{k=0}^{\sum_{l=0}^{Y_j}S_{j,l}} (1-Z_{i,j,k})|  Y, S]])\\
                &=&E[D_{i-1}] ( 1 -  E[\Pi_{j=0}^{i-1} \Pi_{k=0}^{\sum_{l=0}^{Y_j}S_{j,l}} (1-E[Z_{i,j,k}])])\\    
                &=&E[D_{i-1}] ( 1 -  E[\Pi_{j=0}^{i-1} (1-E[Z_{i,j,k}])^{\sum_{l=0}^{Y_j}S_{j,l}}])\\
                &\leq& E[D_{i-1}] (  1 -  \Pi_{j=0}^{i-1} (1-E[Z_{i,j,k}])^{E[\sum_{l=0}^{Y_j}S_{j,l}]})\\
                &=&E[D_{i-1}] (  1 -  \Pi_{j=0}^{i-1} (1-E[Z_{i,j,k}])^{ E[Y_1]E[S_{1,1}]})\\
                E[D_1] &=& (1-(1-E[Z_{1,0,1}])^{C})
                \end{eqnarray}
                where the third line follows by the law of total probability and the $6$'th by Jensens Inequality.

\end{itemize}   
        \end{proof}

        \begin{proof}[Proof of Proposition \ref{VARFORM}]
                
         With $M_n$,$D_n$,$Q_n$ and $B_n$ uncorrelated, it holds for the variance of $L_n$:
                \begin{eqnarray}
                V[L_n] &=&V[M_n D_n (Q_n+B_n)]\\
                &=& E[M_n]^2 V[D_n (Q_n+B_n)] + V[M_n]E[D_n (Q_n+B_n)]^2\\
                &&+V[M_n]V[D_n (Q_n+B_n)]\\
                \end{eqnarray}
                
                         and further:
                         \begin{eqnarray}               
                         V[D_n (Q_n+B_n)] &=& E[D_n]^2 (V[Q_n]+V[B_n]) + (E[Q_n]+E[B_n])^2 V[D_n]\\
                         &&+ V[D_n] (V[Q_n]+V[B_n])\\
                         \end{eqnarray}
                         
                         For the variance of $Q_n$ it holds:
                         
                         \begin{eqnarray}
                         V[Q_n]&=& E[V[Q_n|\lambda,\sigma,\mu]] +V[E[Q_n|\lambda,\sigma,\mu]] 
                         \end{eqnarray} 
                         
                         and

                         \begin{align}
                         V[Q_n|\lambda,\sigma,\mu] =& \sum_{i=1}^{n} \left( V[\sum_{l=0}^{Y_i}S_{i,l}|\lambda,\mu,\sigma]E[Z_{n,i,1}|\mu]^2+E[\sum_{l=0}^{Y_i}S_{i,l} |\lambda,\mu,\sigma] V[Z_{n,i,1}|\mu]\right)\\
                         =&\sum_{i=1}^{n} \left((V[Y_i] E[S_{i,l}]^2 + E[Y_i] V[S_{i,l}]) E[Z_{n,i,1}]^2+E[Y_i]E[S_{i,l}] V[Z_{n,i,1}]\right)
                         \end{align}
                         by the law of total variance.

                         For $B_n$ we can now use the law of total variance to obtain:  
                         \begin{eqnarray}
                         V[B_n] &=&  V[ \sum_{j= 1} ^C Z_{n,i,k}] \\
                         &=& E[ V[ \sum_{j= 1} ^C Z_{n,i,k}|\mu]] +  V[E[ \sum_{j= 1} ^C Z_{n,i,k}|\mu]]\\
                         &=& E[ C V[ Z_{n,i,k}|\mu]] +  V[ C E[Z_{n,i,k}|\mu]]\\
                         &=& CE[  V[ Z_{n,i,k}|\mu]] +C^2 V[E[Z_{n,i,k}|\mu]] 
                         \end{eqnarray}

  Lastly, for $D_n$, note that $E[D_n^2] = E[D_n]$ because $D_n \in \left\lbrace 0,1 \right\rbrace$. It follows 
  \begin{eqnarray}      
   V[D_n] &=& E[D_n]-E[D_n]^2
  \end{eqnarray}


        \end{proof}

        \section{Data Trace} \label{fitting}
         To have a better understanding of the scaling behavior of real deployments and to create a model suitable for simulating clusters, we fitted 
        the behavior of deployments to a real-world data trace. The particular data trace we use was published by Cortez et al.~\citeyear{Cortez}.  This dataset consists of one month of data of internal Microsoft Azure jobs. It contains 35,576 deployments,\footnote{In contrast to \cite{Cortez} we did not consolidate all deployments a single user runs on a certain day into one. This is because cores that get requested as a new deployment do not need to be accepted on the same cluster.} though only 29,757 of these deployments arrived during the observed time period. Since we want to fit distributions with the goal of simulating arriving deployments, only these 29,757 deployments can be used for most of our fitting. The deployments that arrived before the beginning of the observed period of time cannot be used when making maximum likelihood estimations, because for start times before the observed period of time, only longer lived deployments survived to be observed. Including them would strongly skew the fit.
        The 29,757 deployments activated 4,317,961 cores, out of which 4,211,926 became inactive again during the observed month.
        The exact lifetime of the remaining cores (i.e., the length of time between becoming active and then inactive again) is not known; instead we only have a lower bound on it (i.e., our observation is Type I censored: see for example \cite{EngSta}). Thus, for cores where we only have a lower bound on the lifetime we use the cdf in our likelihood function while for cores whose lifetime is known we use the pdf.

        \subsection{Fitting on the Deployment Level}
        \label{sec:deploymentfit}
        We first fit arrival and departure processes for each individual deployment.
        In keeping with the Markov assumption, we fit a Poisson distribution to the scale out rate of each deployment, while we fit an exponential distribution to the lifetime of cores for each deployment for which at least one core became inactive during the observed time period.
        Note that while we model the cluster admission problem as a discrete-time POMDP, the processes are fit in continuous time. This is more general and avoids imprecisions introduced by time discretization.
        To fit the size of a scale out, we also used a Poisson distribution (plus 1, as scale outs must have at least one core).\footnote{As the Poisson distribution is single-parameter and its variance cannot be set independent of the average size, this is not a particularly good fit for users with large but consistent scale out sizes. However, its simplicity avoids overfitting on the often low number of samples per deployment and it results in a good fit on the population level.}
We further assume that each deployment, had it lived forever, at some point would have made a scale out request for more than 1 core. Since we did not observe these scale-outs and therefore cannot make a direct likelihood fit, we introduce two parameters $P_1$ and $P_2$ to represent them. We assume that the scale out rate of deployments that never scaled out (some because they died, but many simply because the observation period of the dataset ended) is equal to the value for which not observing a scale out has probability $P_1$. We equivalently set the scale out size for deployments that never increase their size by more than 1 core during one scale out event according to $P_2$. We calibrated $P_1$ and $P_2$ by minimizing the (discrete) Cram\'er-von Mises distance of the size of deployments between samples drawn from our fitted model and the data set. The optimal distance is $0.1585$ and an overlay of both cumulative distribution functions can be seen in Figure \ref{cramer1}. Note that most of the remaining distance does not seem to be caused by limitations of our model or fitting procedure, but by limitations of the dataset. The dataset, while relatively large, still does contain a somewhat small selection of deployments from the tail. More importantly, it only contains \emph{internal} Azure deployments, so the types of workloads are limited. As such, it contains few deployments of sizes between $100$ and $1500$, but a relatively large number of deployments of sizes between $1500$ and $2000$. This effect is visualized in Figure \ref{cramer2}, which shows the CDF over the percentage of utilization in the cluster coming from deployments of different sizes for both our model and the dataset. 
        
                \begin{figure}[t]
                        \centering%
                        \begin{minipage}{0.5\textwidth}
                                \centering
                                \includegraphics[height=4.5cm]{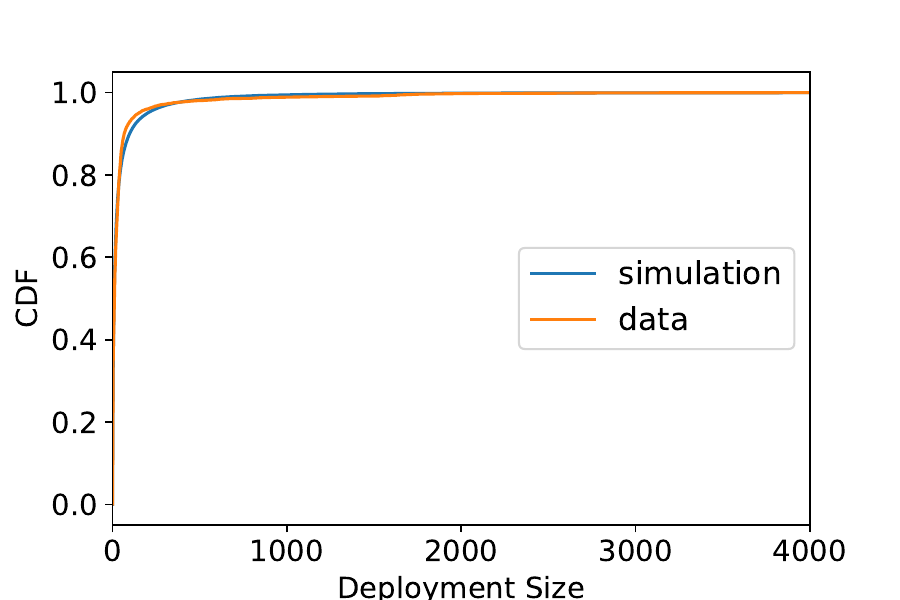}
                                \caption{CDF over number of\\ deployments of all sizes}
                                \label{cramer1}
                        \end{minipage}%
                        \begin{minipage}{0.5\textwidth}
                                \centering
                                \includegraphics[height=4.5cm]{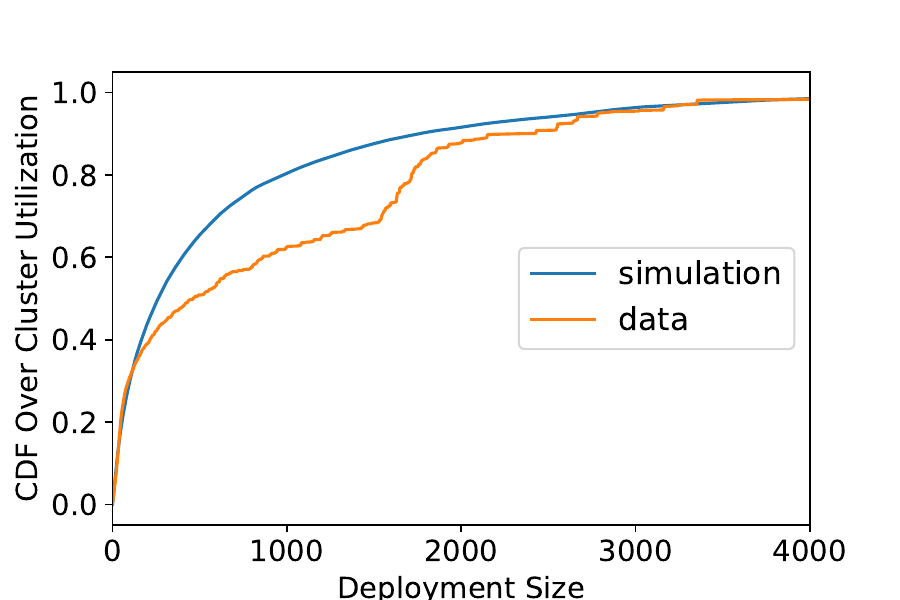}
                                \caption{CDF over utilization percentage\\ from deployments of all sizes}
                                \label{cramer2}
                        \end{minipage}%
                \end{figure}%
        
                \begin{table*}[b]
                        \small
                        
                        \centering
                        
                        \begin{tabular}{cc}
                                \hline
                                \multirow{3}{*}{Priors} & $\mu \approx Gamma(0.3107,0.5778)$ \\
                                
                                & $\lambda \approx Gamma(0.4907,0.4496)$\\
                                
                                &$\sigma \approx Gamma(0.2616,0.0552)$\\
                                \hline \noalign{\smallskip                              }
                                
                                \hline 
                                \multirow{2}{*}{Global Parameters} & $\Delta = 0.119$ \\
                                
                                &  $\nu = 0.673$   \\
                                \hline \noalign{\smallskip}                     
                        \end{tabular}
                        \caption{Fitted processes}      \label{table1a}
                \end{table*}

        \subsection{Fitting on the Population Level}
        \label{sec:FittingOnThePopulationLevel}

      With the distributions for each deployment in place, we now fit Gamma distributions for the population. The parameters of the processes for each arriving deployment are drawn from these populations. 
        As the data was skewed, positive, and not really heavy tailed, a Gamma distribution is a natural and very general candidate (containing the Chi-squared, Erlang and Exponential distributions as special cases), with the added benefit of being conjugate prior to the deployment processes. The resulting model and parameters from our fits are shown in Table~\ref{table1a}. While the scale out size is fit directly to the samples, scale out rate and core lifetime are highly correlated. The longer a deployment's cores live, the lower the rate at which new cores arrive, as can be seen in Figure \ref{lifespawn}. This shows that deployments with long lived cores do not necessarily have more active cores. To account for this, we fit the power law relationship $\nu$ between scale out rate and lifetime, i.e.,  we fitted the prior distribution on scale out rates multiplied by the respective core lifetimes taken to the power of $\nu$. We have chosen $\nu$ such that the mean absolute distance between normalized scale out rate of each deployment and the average (normalized) scale out rate is minimized. 
        
        To visualize the fitted distributions, Figure \ref{lifegamma} shows the
        CDF of the Gamma distribution for the lifetime parameter, overlaid over the normalized cumulative histogram of the fitted rates of the sample deployments. 
        
        \begin{figure}[t]
        \centering%
                        \begin{minipage}{0.5\textwidth}
                                \centering
                                \includegraphics[height=4.5cm]{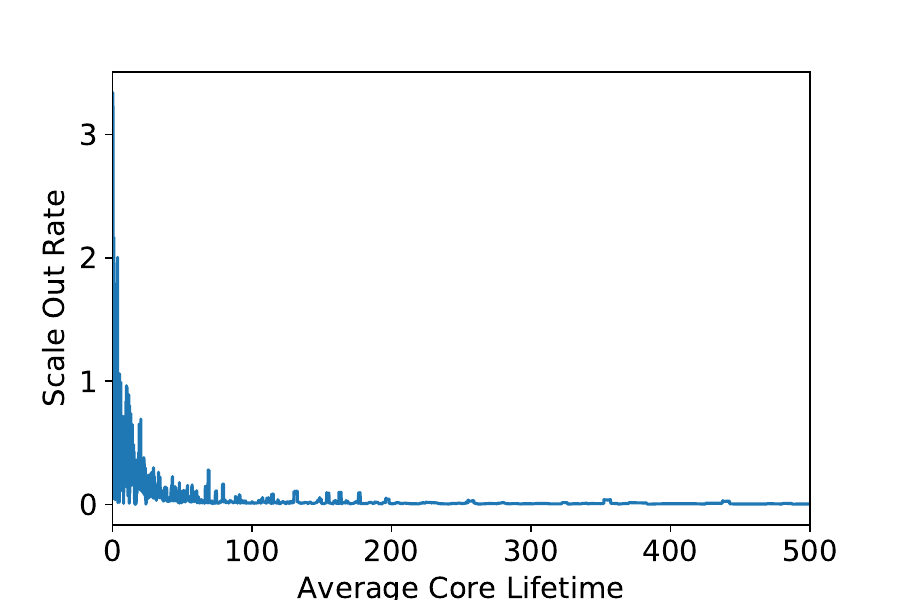}
                                \caption{Scale out rate as a  function\\ of average core lifetime}
                                \label{lifespawn}
                        \end{minipage}%
                \begin{minipage}{0.5\textwidth}
                        \centering
                        \includegraphics[height=4.5cm]{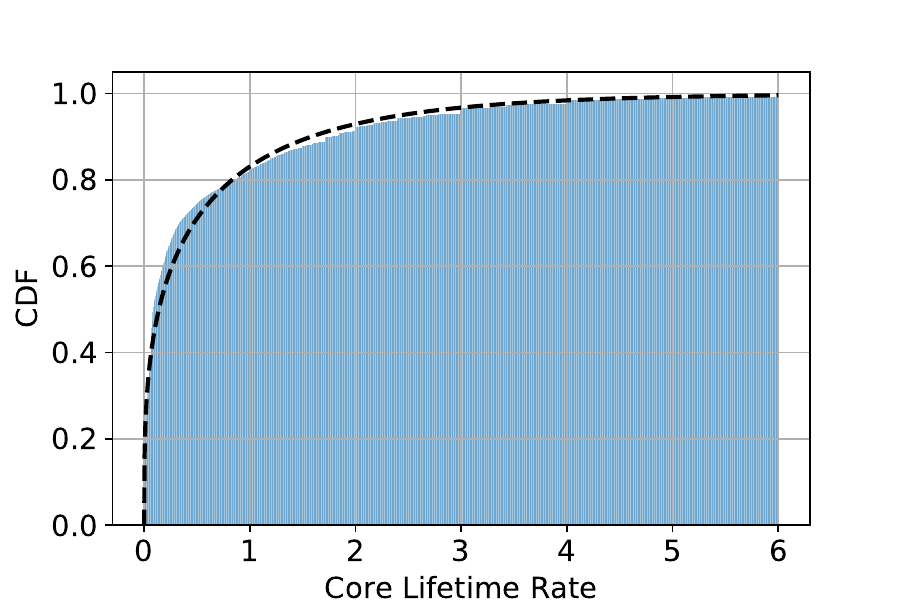}
                        \caption{Distribution of the core lifetime\\ rate parameter}
                        \label{lifegamma}
                \end{minipage}%
        \end{figure}%
        
        Figure \ref{rategamma} shows the CDF for the normalized scale out rates over the relevant cumulative histogram. The actual scale out rate of a sampled deployment is now simply the normalized scale out rate multiplied by the average core lifetime. Figure \ref{sizegamma} shows the fitted CDF for the scale out size parameter over its cumulative histogram. 

        \begin{figure}[t]
                \centering%
                \begin{minipage}{0.48\textwidth}
                        \centering
                        \includegraphics[height=4.5cm]{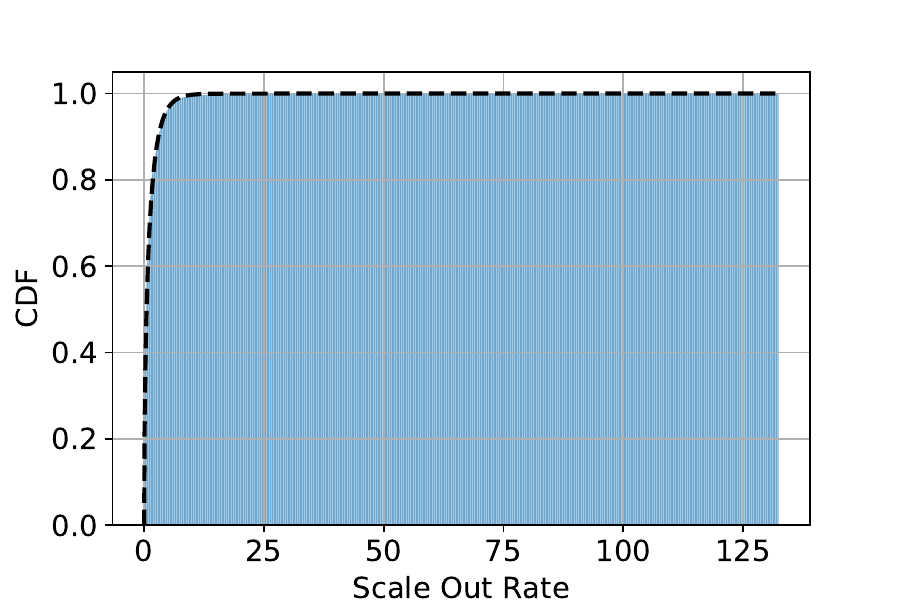}
                        \caption{Distribution of the scale  out\\ rate parameter}
                        \label{rategamma}
                        
                \end{minipage}%
                                \begin{minipage}{0.48\textwidth}
                                        \centering                                      
                                        \includegraphics[height=4.5cm]{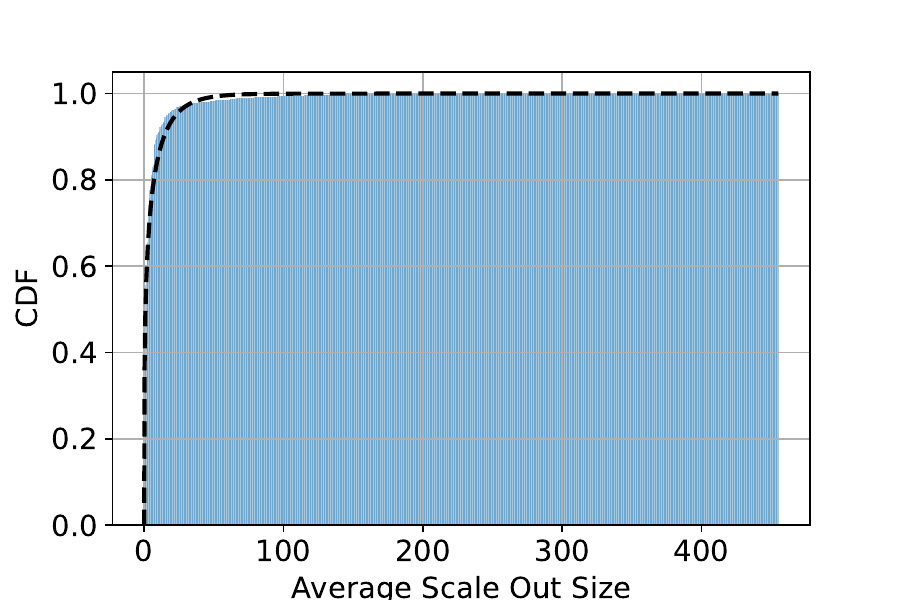}
                                        \caption{Distribution of the scale out\\ size parameter}
                                        \label{sizegamma}
                                \end{minipage}%
        \end{figure}%

        \paragraph{Deployment Shutdown.}
        While most deployments in the dataset die because they have zero active cores, 5,980 of the 22,241 deployments that both arrive and die during the observed period seem to get actively shut down.  By this we mean that they had at least 3 VMs that all shut down simultaneously. This would be highly unlikely if deployments only die when cores or VMs become inactive independently. To capture such behavior we fit an exponential distribution over the number of expected core lifetime deployments lived. The maximal lifetime of deployments that did not get shut down was assumed to be censored to their realized lifetime.
        
        \section{Moment Approximation with Gamma Priors}\label{A:GAMMA}
                                        \begin{proposition}\label{PROP:GAMMAPRIOR}

                                                When $Y_i\sim Pois(\lambda \mu^\nu)$, $\lambda \sim Gamma(a,b)$, $S_{i,l}\sim Pois(\sigma)$, $\sigma \sim Gamma(\alpha, \beta)$, $Z_{n,i,j} \sim Bernoulli(e^{(i-n)\mu})$ (Bernoulli over complementary CDF of an exponential distribution), $\mu \sim Gamma(\mathfrak{a},\mathfrak{b})$ and $M_i \sim Bernoulli(e^{(i-n)\Delta \mu})$  it holds:
                                                
                                                \begin{eqnarray}
                                                E[Q_n]&=&\frac{a}{b} \frac{\alpha +\beta}{\beta} \frac{ \Gamma(\mathfrak{a}+\nu)}{\Gamma(\mathfrak{a})}   \sum_{i=1}^{n}\frac{  \mathfrak{b}^\mathfrak{a}  }{(n+ \mathfrak{b}-i)^{\mathfrak{a}+\nu}}\\
                                                E[D_i] &\leq& E[D_{i-1}]( 1 -  \Pi_{j=0}^{i-1} (1-(1+\frac{i-j}{\mathfrak{b}})^{-(\mathfrak{a})})^{\frac{a}{b} \frac{\alpha }{\beta} })\\
                                                E[Z_{n,i,1}] &=& \frac{\mathfrak{b}^\mathfrak{a}}{(n+ \mathfrak{b}-i)^{\mathfrak{a}} }         \\
                                                E[M_n] &=& \frac{\mathfrak{b}^\mathfrak{a}}{(\Delta n+ \mathfrak{b})^{\mathfrak{a}} }         
                                                \end{eqnarray}
                                                and
                                                \begin{eqnarray}
                                                V[Q_n]&=&   \mathfrak{b}^\mathfrak{a} \frac{ \Gamma(\mathfrak{a}+\nu)}{\Gamma(\mathfrak{a})} \left[ \left( \frac{a}{b}  (\frac{\alpha}{\beta^2} +\frac{\alpha+\beta }{\beta}^2 -1)\ \right) \right. \\
                                                && \left. \sum_{i=1}^{n}\frac{1}{(2n+ \mathfrak{b}-2i)^{\mathfrak{a}+\nu} } +  \frac{a}{b} \frac{\alpha+\beta }{\beta} \sum_{i=1}^{n} \frac{1}{(n+ \mathfrak{b}-i)^{\mathfrak{a}+\nu}} \right] \\
                                                &&+ \left(\frac{a}{b}^2 \frac{\alpha + \beta}{\beta}^2  + (\frac{a}{b}^2  \frac{\alpha}{\beta^2} + \frac{a}{b^2 }\frac{\alpha+\beta}{\beta}^2+ \frac{a}{b^2 }\frac{\alpha}{\beta^2}) \right) \\
                                                &&\left[ \mathfrak{b}^\mathfrak{a}  \frac{ \Gamma(\mathfrak{a}+2\nu)}{\Gamma(\mathfrak{a})} \sum_{1\leq i \leq j \leq n}  \frac{1}{(2n+ \mathfrak{b}-i-j)^{\mathfrak{a}+2\nu} } \right.\\
                                                &&\left.  - (\mathfrak{b}^\mathfrak{a} \frac{ \Gamma(\mathfrak{a}+\nu)}{\Gamma(\mathfrak{a})})^2  \sum_{1\leq i \leq j \leq n} \frac{1}{(n+ \mathfrak{b}-i)^{\mathfrak{a}+\nu} } \frac{1}{(n+ \mathfrak{b}-j)^{\mathfrak{a}+\nu} } \right]\\
                                                &&  + (\frac{a}{b}^2  \frac{\alpha}{\beta^2} + \frac{a}{b^2 }\frac{\alpha+\beta}{\beta}^2+ \frac{a}{b^2 }\frac{\alpha}{\beta^2})  \left[\frac{ \Gamma(\mathfrak{a}+\nu)}{\Gamma(\mathfrak{a})}\sum_{i=1}^{n} \frac{\mathfrak{b}^\mathfrak{a}}{(n+ \mathfrak{b}-i)^{\mathfrak{a}+\nu} }\right]^2
                                                \end{eqnarray}
                                                \begin{eqnarray}
                                                V[Z_{n,i,1}] &=& \frac{\mathfrak{b}^\mathfrak{a}}{(n+ \mathfrak{b}-i)^{\mathfrak{a}} } (1 - \frac{\mathfrak{b}^\mathfrak{a}}{(n+ \mathfrak{b}-i)^{\mathfrak{a}} })\\                                             
                                                CE[  V[ Z_{n,i,k}|\mu]] +C^2 V[E[Z_{n,i,k}|\mu]]                                                                                             
                                                &=& C \frac{\mathfrak{b}^\mathfrak{a}}{(n+ \mathfrak{b}-i)^{\mathfrak{a}} } (1 - C \frac{\mathfrak{b}^\mathfrak{a}}{(n+ \mathfrak{b}-i)^{\mathfrak{a}} }) \\&&+ (C^2-C) \frac{\mathfrak{b}^\mathfrak{a}}{(2n+ \mathfrak{b}-2i)^{\mathfrak{a}} }               \\                                                                                                                                                                              
                                                V[M_n] &=& \frac{\mathfrak{b}^\mathfrak{a}}{(\Delta n+ \mathfrak{b})^{\mathfrak{a}} } (1-\frac{\mathfrak{b}^\mathfrak{a}}{(\Delta n+ \mathfrak{b})^{\mathfrak{a}} })
                                                \end{eqnarray}                          %
                                                
                                                \end{proposition}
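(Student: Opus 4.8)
The plan is to derive every identity by specializing the two distribution-free moment expressions already proved—the first-moment (expectation) proposition and Proposition~\ref{VARFORM}—to the stated Poisson/Gamma/Bernoulli family, so that no fresh probabilistic argument is needed beyond those. The single computational device behind all the closed forms is the generalized Gamma moment integral: if $\mu\sim Gamma(\mathfrak{a},\mathfrak{b})$ in shape--rate form, then for any $r\geq 0$ and any $t>-\mathfrak{b}$,
\begin{eqnarray}
E[\mu^r e^{-t\mu}]=\frac{\mathfrak{b}^\mathfrak{a}}{\Gamma(\mathfrak{a})}\int_0^\infty \mu^{\mathfrak{a}+r-1}e^{-(\mathfrak{b}+t)\mu}\,d\mu=\frac{\Gamma(\mathfrak{a}+r)}{\Gamma(\mathfrak{a})}\frac{\mathfrak{b}^\mathfrak{a}}{(\mathfrak{b}+t)^{\mathfrak{a}+r}}.
\end{eqnarray}
I would first dispatch the four elementary quantities. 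Taking $r=0$ recovers the Gamma moment generating function; since $E[Z_{n,i,1}\mid\mu]=e^{(i-n)\mu}$ and $E[\Omega_n\mid\mu]=e^{-\Delta n\mu}$, evaluating the integral at $t=n-i$ and at $t=\Delta n$ yields $E[Z_{n,i,1}]$ and $E[\Omega_n]$ in the claimed form. Because both $Z_{n,i,1}$ and $\Omega_n$ are binary, each stays marginally Bernoulli with success probability equal to its now-computed mean, so $V[Z_{n,i,1}]=E[Z_{n,i,1}](1-E[Z_{n,i,1}])$ and similarly for $\Omega_n$, giving the two stated variances at once.

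For $E[Q_n]$ I would substitute the Poisson means $E[Y_i\mid\lambda,\mu]=\lambda\mu^\nu$ and $E[S_{i,l}\mid\sigma]=1+\sigma$ into the expectation proposition, pull out the independent factors $E[\lambda]=a/b$ and $E[1+\sigma]=(\alpha+\beta)/\beta$, and apply the integral with $r=\nu$, $t=n-i$ to each summand; this generates the factor $\tfrac{\Gamma(\mathfrak{a}+\nu)}{\Gamma(\mathfrak{a})}\tfrac{\mathfrak{b}^\mathfrak{a}}{(n+\mathfrak{b}-i)^{\mathfrak{a}+\nu}}$ and hence the stated sum. For the $E[D_i]$ bound I would inherit the Jensen inequality already established in the expectation proposition and merely rewrite $1-E[Z_{i,j,1}]=1-(\mathfrak{b}/(i+\mathfrak{b}-j))^{\mathfrak{a}}=1-(1+\tfrac{i-j}{\mathfrak{b}})^{-\mathfrak{a}}$, inserting the appropriate product of prior scale-out means into the exponent; nothing new is needed there.

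The substantive work is $V[Q_n]$, and the bookkeeping is where I expect the main obstacle. Starting from the split $V[Q_n]=E[V[Q_n\mid\lambda,\sigma,\mu]]+V[E[Q_n\mid\lambda,\sigma,\mu]]$ in Proposition~\ref{VARFORM}, I would use the Poisson identities $V[Y_i\mid\lambda,\mu]=E[Y_i\mid\lambda,\mu]=\lambda\mu^\nu$ and $V[S_{i,l}\mid\sigma]=\sigma$ together with the Bernoulli conditional variance $V[Z_{n,i,1}\mid\mu]=e^{(i-n)\mu}-e^{2(i-n)\mu}$, and then reduce every surviving $\mu$-expectation to the integral above: terms in $e^{(i-n)\mu}$ take $t=n-i$, terms in $e^{2(i-n)\mu}$ take $t=2(n-i)$ (producing the $(2n+\mathfrak{b}-2i)^{\mathfrak{a}+\nu}$ denominators). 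For the second summand I would expand $E[Q_n\mid\lambda,\sigma,\mu]=\lambda(1+\sigma)\sum_i\mu^\nu e^{(i-n)\mu}$ as a product of three independent factors via $V[XYZ]=E[X^2]E[Y^2]E[Z^2]-(E[X]E[Y]E[Z])^2$; the cross term $E[(\sum_i\mu^\nu e^{(i-n)\mu})^2]=\sum_{i,j}E[\mu^{2\nu}e^{(i+j-2n)\mu}]$ is precisely where $r=2\nu$ and $t=2n-i-j$ enter, yielding the double sum $\sum_{1\le i\le j\le n-1}$ with the $\Gamma(\mathfrak{a}+2\nu)$ and $(2n+\mathfrak{b}-i-j)^{\mathfrak{a}+2\nu}$ factors. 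The remaining effort is purely to collect the $\lambda$- and $\sigma$-moments ($E[\lambda^2]=\tfrac{a}{b^2}+\tfrac{a^2}{b^2}$, $E[(1+\sigma)^2]$, and so on) into the constant prefactors of the stated expression; the delicate point is keeping the diagonal $i=j$ contributions separate from the off-diagonal ones, since they invoke different instances of the Gamma integral.
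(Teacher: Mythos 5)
Your proposal is correct and follows essentially the same route as the paper's proof: specialize the two general moment propositions to the Poisson/Gamma/Bernoulli family, with the Gamma integral $E[\mu^r e^{-t\mu}]=\frac{\Gamma(\mathfrak{a}+r)}{\Gamma(\mathfrak{a})}\frac{\mathfrak{b}^{\mathfrak{a}}}{(\mathfrak{b}+t)^{\mathfrak{a}+r}}$ as the single computational engine, the law-of-total-variance split for $V[Q_n]$, and the Lomax-type rewriting for the $E[D_i]$ bound. Your only deviations are cosmetic equivalents of the paper's steps: you get $V[Z_{n,i,1}]$ and $V[\Omega_n]$ directly from the fact that a binary variable is marginally Bernoulli (where the paper grinds through the law of total variance to the same $p(1-p)$ form), and you write the variance of the product $\lambda(1+\sigma)\sum_i\mu^\nu e^{(i-n)\mu}$ as $E[X^2]E[Y^2]E[Z^2]-(E[X]E[Y]E[Z])^2$ where the paper groups $\lambda(1+\sigma)$ and applies the two-factor product-variance formula.
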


                                                \begin{proof}
                                                        \begin{itemize}
   \allowdisplaybreaks                                                                 \item For $Q_n$ it holds
                                                                \begin{eqnarray}
                                                                E[Q_n]&=& \sum_{i=1}^{n} E[E[Y_1|\lambda, \mu]E[S_{1,1}| \sigma]E[Z_{n,i,1}|\mu]]\\
                                                                &=& \sum_{i=1}^{n} E[Y_1|\lambda, \mu]E[S_{1,1}| \sigma]E[Z_{n,i,1}|\mu]\\
                                                                &=& \sum_{i=1}^{n} \lambda \mu^\nu  (\sigma +1)  e^{(i-n)\mu}
                                                                \end{eqnarray}
                                                                \begin{eqnarray}
                                                                E[\lambda]&=& \frac{a}{b}\\   
                                                                V[\lambda]&=&         \frac{a}{b^2}\\                 
                                                                E[\sigma+1]&=& \frac{\alpha+\beta }{\beta} \\
                                                                V[\sigma+1]&=& \frac{\alpha}{\beta^2} \\
                                                                E[\mu^\nu e^{(i-n)\mu}] &=&  \int_0^\infty \mu^\nu e^{(i-n)\mu} \frac{\mathfrak{b}^\mathfrak{a}\mu^{\mathfrak{a}-1} e^{- \mathfrak{b} \mu} }{\Gamma(\mathfrak{a})}d\mu\\
                                                                &=&  \frac{\mathfrak{b}^\mathfrak{a} }{\Gamma(\mathfrak{a})}\int_0^\infty  \mu^{\mathfrak{a}-1+\nu} e^{(i-n- \mathfrak{b}) \mu} d\mu\\
                                                                &=&  \frac{\mathfrak{b}^\mathfrak{a} }{\Gamma(\mathfrak{a})} (n+ \mathfrak{b}-i)^{-\mathfrak{a}-\nu} \Gamma(\mathfrak{a}+\nu)\\
                                                                &=& \frac{\mathfrak{b}^\mathfrak{a}}{(n+ \mathfrak{b}-i)^{\mathfrak{a}+\nu} }\frac{ \Gamma(\mathfrak{a}+\nu)}{\Gamma(\mathfrak{a})}
                                                                \end{eqnarray}
                                                                %
                                                                It immediately follows:
                                                                \begin{eqnarray}
                                                                E[Q_n]&=&\frac{a}{b} \frac{\alpha +\beta}{\beta}    \frac{ \Gamma(\mathfrak{a}+\nu)}{\Gamma(\mathfrak{a})} \sum_{i=1}^{n}\frac{\mathfrak{b}^\mathfrak{a} }{(n+ \mathfrak{b}-i)^{\mathfrak{a}+\nu}}                                         
                                                                \end{eqnarray}
                                                                Next we will calculate 
                                                                
                                                                \begin{eqnarray}
                                                                V[Q_n]&=& E[V[Q_n|\lambda,\sigma,\mu]] \\
                                                                &&+V[E[Q_n|\lambda,\sigma,\mu]]\\       
                                                                \end{eqnarray}
                                                                Before we can do so, we need to collect a few easy supporting results:
                                                                \begin{eqnarray}
                                                                V[Y_1|\lambda,\mu]      &=& \lambda \mu^\nu  \\ 
                                                                V[S_{1,1}|\sigma]&=& \sigma\\
                                                                E[Z_{n,i,1}|\mu]^2 &=& e^{((i-n)\mu)^2} \\
                                                                &=& e^{(2i-2n)\mu} \\
                                                                &=& E[Z_{2n,2i,1}|\mu]\\        
                                                                V[Z_{n,i,1}|\mu] &=& e^{((i-n)\mu)}(1-e^{((i-n)\mu)} )\\
                                                                &=&  e^{((i-n)\mu)}-e^{2((i-n)\mu)} \\
                                                                &=& E[Z_{n,i,1}|\mu]-E[Z_{2n,2i,1}|\mu]\\
                                                                \end{eqnarray}
                                                                We also need
                                                                \begin{eqnarray}
                                                                E[\lambda^2]&=& V[\lambda]+E[\lambda]^2\\
                                                                &=& \frac{a}{b^2}+\frac{a}{b}^2\\
                                                                E[(\sigma+1)^2]&=&V[\sigma+1]+E[\sigma+1]^2\\
                                                                &=& \frac{\alpha}{\beta^2} +\frac{\alpha+\beta }{\beta}^2 \\
                                                                E[\mu^{2\nu} e^{(2i-2n)\mu}] &=& \int_0^\infty \mu^{2\nu} e^{(2i-2n)\mu} \frac{\mathfrak{b}^\mathfrak{a}\mu^{\mathfrak{a}-1} e^{- \mathfrak{b} \mu} }{\Gamma(\mathfrak{a})}d\mu\\
                                                                &=&  \frac{\mathfrak{b}^\mathfrak{a} }{\Gamma(\mathfrak{a})}\int_0^\infty  \mu^{\mathfrak{a}-1+2\nu} e^{(2i-2n- \mathfrak{b}) \mu} d\mu\\
                                                                &=&  \frac{\mathfrak{b}^\mathfrak{a} }{\Gamma(\mathfrak{a})} (2n+ \mathfrak{b}-2i)^{-\mathfrak{a}-2\nu} \Gamma(\mathfrak{a}+2\nu)\\
                                                                &=& \frac{ \Gamma(\mathfrak{a}+2\nu)}{\Gamma(\mathfrak{a})} \frac{\mathfrak{b}^\mathfrak{a}}{(2n+ \mathfrak{b}-2i)^{\mathfrak{a}+2\nu} }                                  
                                                                \end{eqnarray}  
                                                                This now allows us to calculate everything that is needed for the first half of the variance of $Q_n$, i.e.,  $E[V[Q_n|\lambda,\sigma,\mu]]$. First note that
                                                                \begin{eqnarray}
                                                                V[Q_n|\lambda,\sigma,\mu] &=&\sum_{i=1}^{n} ( \left( V[Y_i|\lambda,\mu] E[S_{i,l}|\sigma]^2 \right. \\
                                                                &&\left. + E[Y_i|\lambda,\mu] V[S_{i,l}|\mu]\right) E[Z_{n,i,1}|\mu]^2\\
                                                                &&+E[Y_i|\lambda]E[S_{i,l}|\sigma] V[Z_{n,i,1}|\mu])
                                                                \end{eqnarray}
                                                                and 
                                                                \begin{align}
                                                                E[(V[Y_i|\lambda,\mu] E[S_{i,l}|\sigma]^2  E[Z_{n,i,1}|\mu]^2 ] =& E[(\lambda)(\sigma+1)^2 \mu^\nu e^{(2i-2n)\mu}]\\
                                                                =& E[\lambda]E[(\sigma+1)^2]E[\mu^\nu e^{(2i-2n)\mu}]\\       
                                                                =& \frac{a}{b}  (\frac{\alpha}{\beta^2} +\frac{\alpha+\beta }{\beta}^2 ) \frac{ \Gamma(\mathfrak{a}+\nu)}{\Gamma(\mathfrak{a})}  \frac{\mathfrak{b}^\mathfrak{a}}{(2n+ \mathfrak{b}-2i)^{\mathfrak{a}+\nu} }                                                 
                                                                \end{align}
                                                                
                                                                \begin{eqnarray}
                                                                E[E[Y_i|\lambda,\mu] V[S_{i,l}|\mu]) E[Z_{n,i,1}|\mu]^2] &=& E[\lambda \mu^\nu \sigma e^{(2i-2n)\mu} ]\\
                                                                &=& E[\lambda]E[\sigma]E[\mu^\nu e^{(2i-2n)\mu}]\\
                                                                &=& \frac{a}{b} \frac{\alpha}{\beta} \frac{ \Gamma(\mathfrak{a}+\nu)}{\Gamma(\mathfrak{a})} \frac{\mathfrak{b}^\mathfrak{a}}{(2n+ \mathfrak{b}-2i)^{\mathfrak{a}+\nu} }       
                                                                \end{eqnarray}
                                                                
                                                                and                                                             
                                                            \begin{align}
                                                                E[E[Y_i|\lambda,\mu]E[S_{i,l}|\sigma] V[Z_{n,i,1}|\mu]] =& E[\lambda \mu^\nu (\sigma+1) (E[Z_{n,i,1}|\mu]-E[Z_{2n,2i,1}|\mu]) ]\\
                                                                =& E[\lambda \mu^\nu (\sigma+1) (e^{(i-n)\mu}-e^{(2i-2n)\mu}) ]\\
                                                                =& E[\lambda]E[\sigma+1](\mu^\nu E[e^{(i-n)\mu}]-E[\mu^\nu e^{(2i-2n)\mu}])      \\
                                                                =& \frac{a}{b} \frac{\alpha+\beta }{\beta} \frac{ \Gamma(\mathfrak{a}+\nu)}{\Gamma(\mathfrak{a})} \left(\frac{ \mathfrak{b}^\mathfrak{a}}{(n+ \mathfrak{b}-i)^{\mathfrak{a}+\nu}} - \frac{\mathfrak{b}^\mathfrak{a}}{(2n+ \mathfrak{b}-2i)^{\mathfrak{a}+\nu}} \right)
                                                                \end{align}  
                                                                
                                                                It follows: 
                                                                \begin{align}
                                                                E[(V[Y_i|\lambda,\mu] E[S_{i,l}|\sigma]^2  E[Z_{n,i,1}|\mu]^2 ] =& \mathfrak{b}^\mathfrak{a} \frac{ \Gamma(\mathfrak{a}+\nu)}{\Gamma(\mathfrak{a})} \left[ \left( \frac{a}{b}  (\frac{\alpha}{\beta^2} +\frac{\alpha+\beta }{\beta}^2 ) +\frac{a}{b} \frac{\alpha}{\beta} -\frac{a}{b} \frac{\alpha+\beta }{\beta} \right) \right. \\
   & \left. \frac{1}{(2n+ \mathfrak{b}-2i)^{\mathfrak{a}+\nu} } +  \frac{a}{b} \frac{\alpha+\beta }{\beta}  \frac{1}{(n+ \mathfrak{b}-i)^{\mathfrak{a}+\nu}} \right] \\
 =& \mathfrak{b}^\mathfrak{a}  \frac{ \Gamma(\mathfrak{a}+\nu)}{\Gamma(\mathfrak{a})} \left[ \left( \frac{a}{b}  (\frac{\alpha}{\beta^2} +\frac{\alpha+\beta }{\beta}^2 -1)\ \right) \right. \\
& \left. \frac{1}{(2n+ \mathfrak{b}-2i)^{\mathfrak{a}+\nu} } +  \frac{a}{b} \frac{\alpha+\beta }{\beta}  \frac{1}{(n+ \mathfrak{b}-i)^{\mathfrak{a}+\nu}} \right] 
                                                                \end{align}
                                                                
                                                                Finally for the second part of the variance, i.e., $V[E[Q_n|\lambda,\sigma,\mu]]$, we need:
                                                                
                                                                \begin{align}
                                                                V[\sum_{i=1}^{n} \mu^\nu e^{(i-n)\mu}] =& \sum_{1\leq i \leq j \leq n}  Cov[ \mu^\nu e^{(i-n)\mu},  \mu^\nu e^{(j-n)\mu}]\\
                                                                =&  \sum_{1\leq i \leq j \leq n}  E[ \mu^\nu e^{(i-n)\mu} \mu^\nu e^{(j-n)\mu}]-E[\mu^\nu e^{(i-n)\mu}]E[\mu^\nu e^{(j-n)\mu}]\\
                                                                =&  \sum_{1\leq i \leq j \leq n}  E[\mu^{2\nu} e^{(i+j-2n)\mu}]-E[ \mu^\nu e^{(i-n)\mu}]E[\mu^\nu e^{(j-n)\mu}]\\
                                                                =&  \sum_{1\leq i \leq j \leq n} ( \frac{\mathfrak{b}^\mathfrak{a}}{(2n+ \mathfrak{b}-i-j)^{\mathfrak{a}+2\nu} } \frac{ \Gamma(\mathfrak{a}+2\nu)}{\Gamma(\mathfrak{a})}\\
                                                                & - \frac{ (\Gamma(\mathfrak{a}+\nu)}{\Gamma(\mathfrak{a})})^2\frac{\mathfrak{b}^\mathfrak{a}}{(n+ \mathfrak{b}-i)^{\mathfrak{a}+\nu} } \frac{\mathfrak{b}^\mathfrak{a}}{(n+ \mathfrak{b}-j)^{\mathfrak{a}+\nu} })\\
                                                                =& \mathfrak{b}^\mathfrak{a}  \frac{ \Gamma(\mathfrak{a}+2\nu)}{\Gamma(\mathfrak{a})} \sum_{1\leq i \leq j \leq n}  \frac{1}{(2n+ \mathfrak{b}-i-j)^{\mathfrak{a}+2\nu} }\\
                                                                & - (\mathfrak{b}^\mathfrak{a} \frac{ \Gamma(\mathfrak{a}+\nu)}{\Gamma(\mathfrak{a})})^2  \sum_{1\leq i \leq j \leq n} \frac{1}{(n+ \mathfrak{b}-i)^{\mathfrak{a}+\nu} } \frac{1}{(n+ \mathfrak{b}-j)^{\mathfrak{a}+\nu} }
                                                                \end{align}
                                                                \begin{eqnarray}
                                                                V[\lambda  (\sigma +1)]&=& E[\lambda]^2 V[\sigma+1]+V[\lambda] E[\sigma+1]^2+V[\lambda] V[\sigma+1]\\
                                                                &=&\frac{a}{b}^2  \frac{\alpha}{\beta^2} + \frac{a}{b^2 }\frac{\alpha+\beta}{\beta}^2+ \frac{a}{b^2 }\frac{\alpha}{\beta^2}
                                                                \end{eqnarray}          
                                                                Now we can write:                                  
                                                                \begin{align}
   V[E[Q_n|\lambda,\sigma,\mu]] =& V[\sum_{i=1}^{n} E[ \sum_{k=1}^{\sum_{l=0}^{Y}S_{i,l}}Z_{n,i,k}|\lambda, \sigma, \mu]]\\
                                                                =& V[\sum_{i=1}^{n} \lambda \mu^\nu (\sigma +1)  e^{(i-n)\mu}]\\
                                                                =& V[ \lambda (\sigma +1)  \sum_{i=1}^{n} \mu^\nu e^{(i-n)\mu}]\\
                                                                =& E[\lambda  (\sigma +1)]^2 V[\sum_{i=1}^{n} \mu^\nu e^{(i-n)\mu}] +V[\lambda  (\sigma +1)] V[\sum_{i=1}^{n}  \mu^\nu e^{(i-n)\mu}] \\
                                                                &  + V[\lambda  (\sigma +1)] E[\sum_{i=1}^{n}  \mu^\nu e^{(i-n)\mu}]^2\\
                                                                =& \left(\frac{a}{b}^2 \frac{\alpha + \beta}{\beta}^2  + (\frac{a}{b}^2  \frac{\alpha}{\beta^2} + \frac{a}{b^2 }\frac{\alpha+\beta}{\beta}^2+ \frac{a}{b^2 }\frac{\alpha}{\beta^2}) \right) \\
                                                                &\left( \mathfrak{b}^\mathfrak{a}  \frac{ \Gamma(\mathfrak{a}+2\nu)}{\Gamma(\mathfrak{a})} \sum_{1\leq i \leq j \leq n}  \frac{1}{(2n+ \mathfrak{b}-i-j)^{\mathfrak{a}+2\nu} } \right.\\
                                                                &\left.  - (\mathfrak{b}^\mathfrak{a} \frac{ \Gamma(\mathfrak{a}+\nu)}{\Gamma(\mathfrak{a})} )^2  \sum_{1\leq i \leq j \leq n} \frac{1}{(n+ \mathfrak{b}-i)^{\mathfrak{a}+\nu} } \frac{1}{(n+ \mathfrak{b}-j)^{\mathfrak{a}+\nu} } \right)\\
                                                                &  + (\frac{a}{b}^2  \frac{\alpha}{\beta^2} + \frac{a}{b^2 }\frac{\alpha+\beta}{\beta}^2+ \frac{a}{b^2 }\frac{\alpha}{\beta^2})  \left(\frac{ \Gamma(\mathfrak{a}+\nu)}{\Gamma(\mathfrak{a})} \sum_{i=1}^{n} \frac{\mathfrak{b}^\mathfrak{a}}{(n+ \mathfrak{b}-i)^{\mathfrak{a}+\nu} }\right)^2                              
                                                                \end{align}
                                                                Inserting into Propositions 1 and 2 now yields the result.                
                                                                
                                                                \item For $D_i$ note the following:                                       
                                                                As an exponential distribution whose rate is drawn from a Gamma distribution with shape $\mathfrak{a}$ and rate $\mathfrak{b}$ is equal to a Lomax distribution with scale $\mathfrak{b}$ and shape $\mathfrak{a}$, a single $Z_{i,j,k}$ is equal to a Bernoulli trial over the complementary CDF of the Lomax distribution.
                                                                \begin{eqnarray}                         
                                                                E[Z_{i,j,k}] &=&  (1+\frac{i-j}{\mathfrak{b}})^{-(\mathfrak{a})}\\
                                                                \end{eqnarray}
                                                                It therefore holds 
                                                                \begin{eqnarray}                         
                                                                E[D_i]&\leq& E[D_{i-1}] (  1 -  \Pi_{j=0}^{i-1} (1-(1+\frac{i-j}{\mathfrak{b}})^{-(\mathfrak{a})})^{ \frac{a}{b} \frac{\alpha}{\beta}})\\
                                                                E[D_1] &=& (1-(1-(1+\frac{1}{\mathfrak{b}})^{-(\mathfrak{a})})^{C})
                                                                \end{eqnarray}
                                                                \item It now directly follows 
                                                                \begin{eqnarray}                         
                                                                V[Z_{n,i,1}] &=&  E[V[Z_{n,i,1}|\mu]] + V[E[Z_{n,i,1}|\mu]]\\
                                                                &=& E[E[Z_{n,i,1}|\mu]] - E[E[Z_{2n,2i,1}|\mu]] +       V[e^{(i-n)\mu}]\\
                                                                &=&  \frac{\mathfrak{b}^\mathfrak{a}}{(n+ \mathfrak{b}-i)^{\mathfrak{a}} } - \frac{\mathfrak{b}^\mathfrak{a}}{(2n+ \mathfrak{b}-2i)^{\mathfrak{a}} }\\
                                                                && +     \frac{\mathfrak{b}^\mathfrak{a}}{(2n+ \mathfrak{b}-2i)^{\mathfrak{a}} } - \frac{\mathfrak{b}^\mathfrak{a}}{(n+ \mathfrak{b}-i)^{\mathfrak{a}} } \frac{\mathfrak{b}^\mathfrak{a}}{(n+ \mathfrak{b}-i)^{\mathfrak{a}} }\\
                                                                &=&  \frac{\mathfrak{b}^\mathfrak{a}}{(n+ \mathfrak{b}-i)^{\mathfrak{a}} } (1 - \frac{\mathfrak{b}^\mathfrak{a}}{(n+ \mathfrak{b}-i)^{\mathfrak{a}} })
                                                                \end{eqnarray}
                                                                \begin{align}                    
                                                                &CE[  V[ Z_{n,i,k}|\mu]] +C^2 V[E[Z_{n,i,k}|\mu]] \\
                                                                =& C (E[E[Z_{n,i,1}|\mu]] -  E[E[Z_{2n,2i,1}|\mu]])+      C^2 V[e^{(i-n)\mu}]\\
                                                                =& C (\frac{\mathfrak{b}^\mathfrak{a}}{(n+ \mathfrak{b}-i)^{\mathfrak{a}} } -  \frac{\mathfrak{b}^\mathfrak{a}}{(2n+ \mathfrak{b}-2i)^{\mathfrak{a}} })+     C^2 (\frac{\mathfrak{b}^\mathfrak{a}}{(2n+ \mathfrak{b}-2i)^{\mathfrak{a}} } -  (\frac{\mathfrak{b}^\mathfrak{a}}{(n+ \mathfrak{b}-i)^{\mathfrak{a}} })^2)\\
                                                                =& C \frac{\mathfrak{b}^\mathfrak{a}}{(n+ \mathfrak{b}-i)^{\mathfrak{a}} } (1 - C \frac{\mathfrak{b}^\mathfrak{a}}{(n+ \mathfrak{b}-i)^{\mathfrak{a}} }) + (C^2-C) \frac{\mathfrak{b}^\mathfrak{a}}{(2n+ \mathfrak{b}-2i)^{\mathfrak{a}} }
                                                                \end{align}

                                                                \item For $M_i$ it holds by the same argument, 
                                                                \begin{eqnarray}
                                                                E[M_n] &=& \frac{\mathfrak{b}^\mathfrak{a}}{(\Delta n+ \mathfrak{b})^{\mathfrak{a}} }\\                      
                                                                V[M_n] &=& \frac{\mathfrak{b}^\mathfrak{a}}{(\Delta n+ \mathfrak{b})^{\mathfrak{a}} } (1-\frac{\mathfrak{b}^\mathfrak{a}}{(\Delta n+ \mathfrak{b})^{\mathfrak{a}} })
                                                                \end{eqnarray}

                                                                \end{itemize}

                                                                \end{proof}

        \section{Importance Sampling} \label{AP:SAMP}

        Importance sampling is a technique that, instead of drawing samples $r$ from the \emph{nominal sampling distribution} $p$ in order to estimate the expected value of some feature of the samples $f$, it draws the samples from an \emph{importance distribution} $q$. These samples are then weighted  according to the ratio between both distributions in order to obtain an estimate of $E[f]$ with a lower variance. This can vastly reduce the number of samples required to make statements with high confidence. 
        It is well known (see \citeauthor{kahn1953methods} \citeyear{kahn1953methods}) that the optimal importance density satisfies
        \begin{align}
        q(r) &=   \frac{f(r)p(r)}{ E[f(r)]}\label{eq:optq}
        \end{align}
        While calculating this exactly would require knowledge about the very value we want to estimate, it can often be approximated reasonably well. In our case, where each simulation run $r$ depends on tens of thousands of random variables, we define a heuristic measure that roughly indicates how likely a run is to fail and partition the set of all possible runs into buckets using this measure. We then approximate the optimal $q$ on the level of buckets.  

   As a first step, we now define the heuristic measure we use: 
%
%
%
        
                \begin{definition}
                                        For a deployment $x$, denote by 
                                        \begin{align}
                                        i^x= E[L^x_n]+\sqrt{(1-0.01)/0.01 *Var[L_n^x]}
                                        \end{align}
                        the upper bound of the $99\%$ confidence interval of a deployment's size in timestep $n$ as given by Cantelli's inequality.
                        For a given run $r$, pre-draw the parameters of all deployments that might arrive during the simulation run. 
                        Then consider the following extremely simplified simulation:
                        \begin{enumerate}
                                \item On the first of each month, 730 new deployments arrive. 
                                \item Deployments only die at the end of the month after their maximum lifetime is reached. They do not die when they reach zero cores. 
                                \item The cluster knows each deployment's exact type.
                                \item Deployment always take exactly their expected size.
                                \item Deployments are accepted into the cluster whenever $\sum_{\tilde{x}\in A_\pi (b)} i^x <22000.$  
                        \end{enumerate}
                        Denote by $X(n)$, $n \in [1,36]$ the set of deployments in the cluster at the beginning of each month during this simplified simulation. 
                        Then the \emph{badness measure} $BM$ of a run $r$ is defined as                   
                        \begin{align}
                        BM(r) = max_{n} \sum_{\tilde{x}\in X(n)} i^x. 
                        \end{align}                             
                \end{definition}   

        This is a reasonable (though highly heuristic) predictor of whether a run might produce a very large number failures. Most importantly, because it assumes away all randomness that occurs during the simulation run, it can be evaluated very quickly ($<1$ second). 
        
To properly utilize importance sampling, we now sort any simulation run $r$ into one of three buckets based on their $BM$ value: $I_1 = \left\lbrace r: BM(r)\leq 25000 \right\rbrace$,\\
 $I_2 = \left\lbrace r: 25000 \leq BM(r)\leq 30000\right\rbrace$, $I_3 = \left\lbrace r: 30000 \leq BM(r) \right\rbrace$. Before we can apply importance sampling, we calculate the probability for a given run to be in each of the buckets. For this, we calculate $BM$ for $100,000$ runs. The resulting probabilities can be found in Table \ref{tab:naivProb}.
                \begin{table}[t]

                        \centering
                
                        \vspace{2pt}
                        \begin{tabular}{rccc}
                                \noalign{\smallskip} \hline \noalign{\smallskip}
                                
                                \textbf{Probabilities} &      $I_1$ & $I_2$ &$I_3$\\
                                \hline \noalign{\smallskip}
                                  $p(I_i)$        & $0.5699 $  & $0.4121$&$0.018$ \\

                                \noalign{\smallskip} \hline \noalign{\smallskip}
                                \noalign{\smallskip} \hline \noalign{\smallskip}
                                
                                 $p(I_i|\cap_{k\geq i}I_k)$&   $0.88319 $  & $0.9582$&$1$ \\

                                \noalign{\smallskip} \hline \noalign{\smallskip}
                                \noalign{\smallskip} \hline \noalign{\smallskip}
                                $p_r(I_i|\cap_{k\geq i}I_k)$ &   $0.5369$  & $0.8816$&$0$  \\
                                \hline \noalign{\smallskip
                                }
                        \end{tabular}
                        \caption{Estimation of BM probabilities}
                                        \label{tab:naivProb}
                \end{table}

         To sample runs from the different buckets with different weights, we employ a type of rejection sampling: Before starting a simulation run $r$, we evaluate $BM(r)$. Depending on the bucket $I_i$ the run would result in, we then redraw with some probability $p_r(I_i)$ (i.e., all deployment parameters are discarded and redrawn) and block off all lower buckets (i.e., automatically rejecting any further redraws that would result in $I_j$, $j<i$). The highest bucket (in our case $I_3$) never gets redrawn, i.e., $p_r(I_3)=0$. This scheme continues iteratively until we accept a run. This results in the following importance distribution.
         
         \begin{proposition}
        For a run $r$ with nominal probability $p(r)$ and  BM such that $r\in I_i$, the above rejection scheme results in importance distribution $q$ with 
                           \begin{align}
                           q(r) &= p(r|I_i) \frac{p(I_i|\cap_{k\geq i}I_k) (1-p_r(I_i|\cap_{k\geq i}I_k))}{1-p(I_i|\cap_{k\geq i} I_k)p_r(I_i|\cap_{k\geq i}I_k)} \Pi_{j<i} \frac{p(\cap_{k>j} I_k  | \cap_{k\geq j} I_k )}{1-p(I_j | \cap_{k\geq j} I_k )p_r(I_j | \cap_{k\geq j} I_k)}\label{eq:q}          
                           \end{align}
         
         \end{proposition} 
         \begin{proof}
 
                We  first show this for two buckets $I_1$ and $I_2$. With only two buckets, since $I_2$ is the highest bucket, it has an acceptance probability of $1$.

                        It follows that for a run that would be in $I_1$, we redraw with probability $p_r(I_1)$ and otherwise accept. It thus holds
                  \begin{align}
                  q(I_1) &= p(I_1) ((1-p_r(I_1)) + p_r(I_1) q(I_1).
                  \end{align}
                  
                  This is a geometric series and therefore 
                  
                 \begin{align}
                  q(I_1) &=  \sum_{k=0}^\infty p(I_1) (1-p_r(I_1)) ( p(I_1)p_r(I_1))^k\\
                   &=  \frac{p(I_1) (1-p_r(I_1))}{1-p(I_1)p_r(I_1)}
                 \end{align}
                 Similarly, it holds 
                                           \begin{align}
                                           q(I_2) &=   \sum_{i=0}^\infty p(I_2) ( p(I_1)p_r(I_1))^i\\
                                           &=  \frac{p(I_2)}{1-p(I_1)p_r(I_1)}
                                           \end{align}
                                           
                                           Iteratively applying this argument to more than two buckets by dividing the top bucket into two buckets then yields 
                                                                   \begin{align}
                                                                   q(I_i) &= \frac{p(I_i|\cap_{k\geq i}I_k) (1-p_r(I_i|\cap_{k\geq i}I_k))}{1-p(I_i|\cap_{k\geq i} I_k)p_r(I_i|\cap_{k\geq i}I_k)} \Pi_{j<i} \frac{p(\cap_{k>j} I_k  | \cap_{k\geq j} I_k )}{1-p(I_j | \cap_{k\geq j} I_k )p_r(I_j | \cap_{k\geq j} I_k)}      
                                                                   \end{align}
                           Finally, by Bayes' theorem it holds that $q(r) = q(r|I_i) q(I_i)$ and since it further holds that $q(r|I_i) = p(r|I_i)$, the statement of the proposition follows.                                                                   
         \end{proof}
         
         To find good rejection probabilities $p_r$ that result in low sample variance, we did 500 runs for each bucket under the second moment policy with threshold 15000 to get a very rough estimate of $f$ (i.e. the scale out failure probability) for each bucket. The rejection probabilities $p_r$ are then calculated by combining Equation \eqref{eq:optq} and Equation \eqref{eq:q}. The resulting values can also be found in Table \ref{tab:naivProb}. 
         
\section{Ablation of Policies Without Marginal Heuristic}\label{AP:AB}
In this section, we ablate the simulation results for our policies with marginal heuristic with the same policies without these heuristic. Note again that without individual prior observations, no arriving deployment is ever marginal in the policy horizon. Thus, without prior observations, the marginal policies are equivalent to the policies without marginal heuristic. 
        \begin{figure}[t]
                \centering%
                \includegraphics[height=7cm]{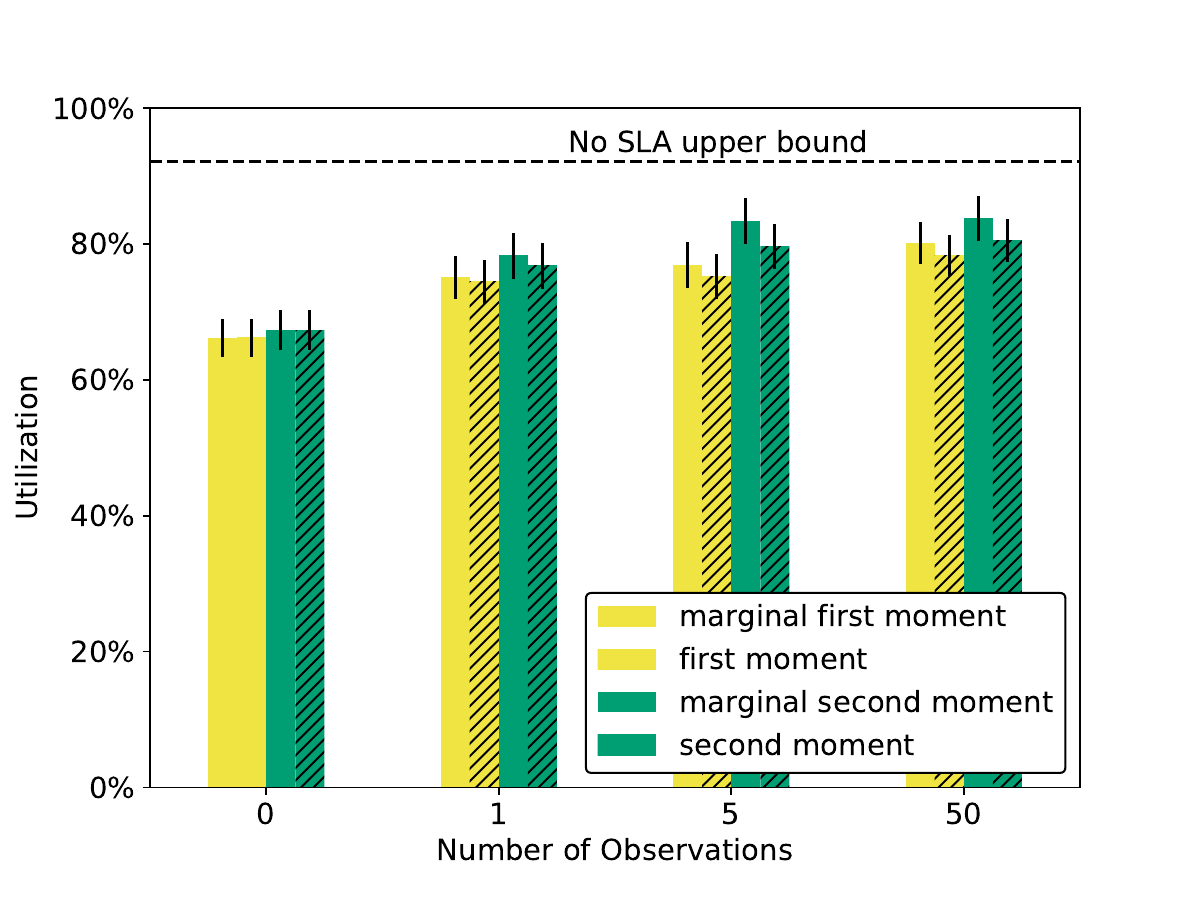}
                \caption{Ablation of policies with and without marginal heuristic (error bars indicate $95\%$ bootstrap confidence intervals)}
                \label{simref_ablation}
        \end{figure}%

We have simulated the policies without marginal heuristic for $1$, $5,$ and $50$ observations, with the same setup as in Section \ref{sim1}. In Figure \ref{simref_ablation}, we contrast those results with the results for the marginal policies.
We see that with just a few observations, the policies with and without marginal heuristic have a very similar performance (though the marginal heuristic still enables slightly higher utilization). This is not surprising, since relatively few arriving deployments are marginal at this level of prior information. Consequently, the utilization gap increases with more information. Both with $5$ and $50$ observations, the marginal second moment policy obtains a sizable utilization increase of more than $3\%$ compared to the second moment policy without the heuristic.

\vskip 0.2in

\end{document}